\def\BState{\State\hskip-\ALG@thistlm}
\newcommand{\bb}{\mathbf}
\newcommand{\PP}{\mathbb{P}}
\newcommand{\EE}{\mathbb{E}}
\newcommand{\II}{\mathbb{I}}
\newcommand{\RR}{\mathbb{R}}
\DeclareMathOperator*{\argmin}{arg\,min}
\numberwithin{equation}{section}
\theoremstyle{plain}
\newtheorem{thm}{Theorem}[section]
\theoremstyle{remark}
\newtheorem{rem}{Remark}[section]
\providecommand{\keywords}[1]
{
  \small	
  \textbf{\textit{Keywords---}} #1
}
\begin{document}

% "Title of the Paper"
\title{Interpretable random forest models through forward variable selection}

\author[1]{Jasper Velthoen}
\author[1]{Juan Juan Cai }
\author[1]{Geurt Jongbloed}

\affil[1]{Department of Applied Mathematics, Delft University of Technology, Mekelweg 4 2628 CD Delft}

\maketitle

\begin{abstract}
	Random forest is a popular prediction approach for handling high dimensional covariates. However, it often becomes infeasible to interpret the obtained high dimensional and non-parametric model. Aiming for obtaining an interpretable predictive model, we develop a forward variable selection method using the continuous ranked probability score (CRPS) as the loss function. Our stepwise procedure leads to a smallest set of variables that optimizes the CRPS risk by performing at each step a hypothesis test on a significant decrease in CRPS risk. We provide mathematical motivation for our method by proving that in population sense the method attains the optimal set. Additionally, we show that the test is consistent provided that the random forest estimator of a quantile function is consistent.

In a simulation study, we compare the performance of our method with an existing variable selection method, for different sample sizes and different correlation strength of covariates. Our method is observed to have a much lower false positive rate. We also demonstrate an application of our method to statistical post-processing of daily maximum temperature forecasts in the Netherlands.  Our method selects about 10\% covariates while retaining the same predictive power.
\end{abstract}

\keywords{random forests, variable selection, CRPS, forward selection, correlated covariates}

\section{Introduction}

In the past decades, random forests \cite{Breiman2001} have gained traction in many areas of application simply because random forests provide good predictive power. A random forest combines several trees, each obtained by recursively making axis-aligned splits in the covariate space until a stopping criterion is reached. The initial algorithm for random forests in \cite{Breiman2001} provides a good approach for conditional mean regression and classification. Later on, the approach was extended to estimate quantiles by  \cite{Meinshausen2006} and further improvements were made in \cite{Athey2019}, which introduced a quantile based splitting criterion. Due to the results in \cite{Meinshausen2006} and \cite{Athey2019}, random forests are also used for estimating the conditional quantile function. 

These quantile forests have been used in statistical post-processing to obtain probabilistic forecasts, e.g. \cite{Taillardat2016},  \cite{Taillardat2017} and  \cite{Whan2018}. Post-processing is used as a second step in weather forecasting following a  first step of physical modelling, see \cite{Kalnay2003}. This first step entails a numerical weather prediction (NWP) model that uses non-linear partial differential equations of atmospheric flow on a spatial and temporal grid. Together with parametrizations of unresolved physical processes within the grid cells and an estimated initial condition, which is obtained from observational data and a so called first guess (i.e. a forecast for that time based on a previous NWP model run), the NWP model approximates the solution to the partial differential equations. An ensemble prediction system (EPS) adds uncertainty quantification to the NWP model by computing an ensemble of forecasts for perturbed initial conditions and/or the parametrization schemes \cite{Kalnay2003}.

Generally there is still a need for bias correction and calibration of numerical weather forecasts, which motivates the second step: statistical post-processing. Historical forecasts together with the corresponding observations are used in post-processing to estimate their statistical relationship. This relationship can then be used in order to calibrate future forecasts. 

%{\color{blue}Including NWP forecasts of other weather phenomenon as predictors often improves the post-processed forecast.}
{ When post-processing forecasts of
a weather phenomenon, a better performance is often attained by adding more
information from the NWP models as predictors.} For example,  \cite{Whan2018} showed that the post-processed precipitation forecasts perform substantially better  when indices of atmospheric
instability from the NWP models are used in modelling the statistical relation. The improvement is due to the fact that the indices of atmospheric instability  help to distinguish between different types of precipitation. A full day of drizzle might accumulate to the same amount as a quick shower. However, the distributions of  precipitation under these two different weather conditions are very different. Incorporating NWP forecasts of other weather phenomena enables the model to capture such differences. 

A natural question is now: `` Which additional forecasts contain useful information on the phenomenon that one is post-processing?"  The set of potential forecasts to include in the statistical model is generally very large and furthermore they exhibit large correlations. In practice, including too many variables often leads to a decrease in statistical efficiency, and more importantly the model becomes hard to interpret. For a practitioner, it is important to understand which variables play key roles in the statistical model and how they calibrate the EPS forecast. This motivates variable selection procedures in statistical post-processing.

A random forest is generally seen as a method that deals rather well with high dimensional covariates. This property comes from the fact that in the tree fitting algorithm, a random forest chooses, the split variables  and split points, in a greedy way based on a certain criterion, e.g. the variance. This is often rather effective in the beginning of the tree fitting as many observations are split, but deep down in the tree there are fewer observations which makes the splitting criterion subject to  higher variances. Therefore global variable selection methods are considered in the literature to improve statistical efficiency and interpretation of the random forest model.

Variable selection in random forests is mainly done in terms of two types of importance measures. The first type calculates the decrease in impurity of a split made in a tree. In \cite{Louppe2013} consistency of these measures is shown on fully randomized trees. But in practice in a random forest setting these impurity measures are shown to exhibit biases (\cite{Strobl2007}). The second type is the permutation measure introduced in \cite{Breiman2001}. This measure computes how much the predictive performance decreases by randomly permuting one single predictor, which breaks the relation between response and the predictor. A popular approach is to perform a backward selection based on the permutation measures, where the model with the best predictive performance is chosen, see e.g. \cite{Genuer2010}, \cite{Fox2017} and  \cite{Gregorutti2017}. 

Correlation between predictors has a large effect on the permutation importance scores. An initial approach of dealing with this is to consider conditional importance scores, \cite{Strobl2008}. This has the downside that in some way the conditioning variable has to be chosen. A more precise analysis of the effect of correlation on permutation measures is done in \cite{Gregorutti2017}, where they conclude that a backward selection is better able to handle correlation between predictors than other strategies incorporating variable importance measures. We show in our simulation study that although the correct variables are often selected by the backward selection, there is no  control on the rate of selected noise variables, i.e. the false positives. 

In this paper, we propose a new method of selecting variables with random forests. By using the so-called continuous ranked probability score as the loss function (cf. \eqref{eq:crps}), we are able to select variables that are informative for the entire conditional distribution instead of just for the conditional mean. The procedure estimates the predictive risk based on the so-called out-of-bag samples (cf. Section \ref{outb}), which is similar to leave-one-out cross validation. Finally, we introduce a hypothesis test for each selection step to test whether a variable significantly decreases the predictive risk. We show by a detailed simulation study that our method controls the false positive rate much better than the backward selection method introduced in \cite{Gregorutti2017}, even in the presence of high correlations.

The outline of the paper is as follows. In Section \ref{sec:mathematical-set-up}, we give a detailed description of the mathematical set-up of the variable selection procedure. Then in Section \ref{sec:method}, we give a small introduction to random forests and show how the variable selection can be applied to the random forest set-up. A comparison with backward selection based on permutation measures is made in Section \ref{sec:sim}. In Section \ref{sec:data}, we apply the method to a practical example of post-processing maximum temperature forecasts and compare it to a standard method in post-processing. Finally, we end with a discussion in Section \ref{sec:dis}.

\section{Forward selection}
\label{sec:mathematical-set-up}

In this section, we describe the mathematical set-up of our forward variable selection method. We provide the intuition of the procedure together with some theoretical motivation. 
For now, we  consider a pair of random observations $(\bb X,Y)$, where $\bb X \in  \RR^p$ and $Y \in \RR$. Let $J \subset \{1, \ldots ,d\}$ denote a set of indices corresponding to the entries of the covariate vector $\bb X$ and $\bb X^J$ denote the vector with the entries from $\bb X$ corresponding to $J$. %Finally we denote $\hat{F}_{Y|\bb X^J}$ the estimate of the conditional distribution function $F_{Y|\bb X^J}$. 

Let  $F_{Y|\bb X^J}$ denote the conditional distribution function of $Y$ given $\bb X^J$. And, let $L(Y,\bb X^J,F_{Y|\bb X^J})$ denote a loss functional measuring the loss between the observation $Y$, the quantity that we want to predict, and $F_{Y|\bb X^J}$,  e.g. the squared error loss $(Y - \int z\mbox{d}F_{Y|\bb X^J}(z|\bb X^J))^2$. In this section, we work from the population perspective and use exact distribution functions. 
%for $Y|\bb X^J$ denoted by $F_{Y|\bb X^J}$. 
The next section will be concerned with the estimation of the conditional distributions using random forests. 

Corresponding to the loss functional, we can now define a risk functional for the subset of variables corresponding to $J$,
\begin{align} \label{eq:R}
R(J)=\EE_{Y,\bb X}[	L(Y,\bb X^J,{F}_{Y|\bb X^J})]. 
\end{align}

In our approach an ideal variable selection procedure selects the set of variables corresponding to $J$ that minimize this risk functional. Define $m_R=\min_{J \subset \{ 1, \ldots , d\}}  R(J)$. Then, the optimal set of variables denoted by $\bb X^{J^*}$ is such that 
\begin{equation}
R(J^*)=m_R ~~\text{ and } ~~  |J^*|=\min \{|J|: R(J)=m_R\} \label{eq: J*}
\end{equation}
 where $|J|$ denotes the cardinality of $J$. 
%A trivial element of the set $\{J: R(J)=m_R\}$ is $\{1, \ldots, d\}$ the under a very general setting, cf. the proof of Theorem \ref{thm:ordering}.
The goal is to identify the smallest model that reaches an optimal risk. This is desirable when it comes to estimating the conditional distribution of $Y$.
It is important to note that  $J^*$ is not necessarily unique. For example two collinear covariates $X_1$ and $X_2$ both contain the same information of $Y$, then including any of the two covariates would result in the same expected loss.

In order to obtain $J^*$, one could evaluate $R(J)$ for all $2^d$ possible sets, which is often computationally infeasible. Instead we propose a forward variable selection approach as follows. We construct a sequence of length $d+1$ of nested sets $J_j$ for $j = 0, \ldots , d$ where $J_0 = \emptyset$ and
\begin{equation}\label{eq:forward-sets}
	J_j = J_{j-1} \cup \left\{  \argmin_{q \notin  J_{j-1}}  R\left(J_{j-1} \cup \{q\}\right)  \right\}.
\end{equation}
Our proposed forward selection procedure selects an optimal set  $J^o$ such that it is the smallest set attaining the minimum risk among $J_j$, $j=0,\ldots d$. More precisely, 
\begin{equation}
J^o=J_{\min\{j: R(J_j)=\min_{0\leq i\leq d}R(J_i) \}}.
\end{equation}

From this point on in the paper, we will choose the loss function equal to the Continuous Rank Probability Score (CRPS), see \cite{Gneiting2007}, defined by,

\begin{equation} \label{eq:crps}
	L(Y,\bb X^J,F_{Y|\bb X^J}) = \int_{-\infty}^{\infty} \left( I(Y \leq z) - F_{Y|\bb X^J}(z|\bb X^J)\right)^2\mbox{d}z.
\end{equation}
The CRPS compares the distribution $F_{Y|\bb X^J}$ with the ideal deterministic forecast, of which the distribution function equals the step function at the observation $Y$. The CRPS is a proper scoring rule for a large class of distribution functions; see Section 4.2 in \cite{Gneiting2007}.

%meaning it is minimized when the function $F$ equals the conditional distribution function of $Y|X$, i.e. $F_{Y|\bb X}$.

In the theorem below we show that under the assumption of independent covariates, the set $J^o$ and  $J^*$ coincide.

\begin{thm} \label{thm:ordering}
Let $X_1,\ldots, X_d$ and $\epsilon$ be independent random variables. Let $h : \RR^{|J^*| + 1} \to \RR$ be a real valued measurable function and define $Y=h(\bb X^{J^*},\epsilon)$, where $J^*\subseteq \{1,\ldots,d\}$. Assume that $\EE [Y^2] <\infty$, and for any $I\subsetneq J \subseteq J^*$,  there exists a set $S \subseteq \RR$ with positive Lebesgue measure such that $\EE[I(Y\leq z)|\bb X^J]$ is not $\sigma(\bb X^I)$ measurable for all $z\in S$. Then $J^*$ is the unique subset of $\{ 1, \ldots , d\}$ satisfying \eqref{eq: J*}, and $J^0=J^*$.

	\begin{proof}
		Let $(\Omega,\mathcal{A},\mu)$ denote the probability space supporting $X_1,\ldots, X_p$ and $\epsilon$. Define the standard inner product on $L^2(\Omega,\mathcal{A},\mu)$ by  $(Z_1, Z_2)=\EE(Z_1 Z_2)$, for  any random variables $Z_1$ and $Z_2$ on $(\Omega,\mathcal{A}, \mu)$. Then $L^2(\Omega,\sigma(X_1,\ldots, X_p, \epsilon),\mu)$ becomes a Hilbert space, where the conditional expectation $\EE(Z|\bb X^J)$ is the orthogonal projection of $Z$ onto the closed linear subspace $L^2(\Omega,\sigma(\bb X^J),\mu)$. Now we have 
		\begin{align*}
		R(J)= &\int_{-\infty}^{\infty}\EE \left[ \left( I(Y\leq z) - {F}_{Y|\bb X^J}(z|\bb X^J)\right)^2\right] \mbox{d}z\\
		=&\int_{-\infty}^{\infty}\EE \left[ \left( I(Y\leq z) - \EE[I(Y\leq z)|\bb X^J]\right)^2\right] \mbox{d}z\\
		=:&\int_{-\infty}^{\infty} g_J(z)\mbox{d}z.
		\end{align*}
As the conditional expectation equals the orthogonal projection, for any $z\in \RR$,

\begin{equation}
	g_J(z)=\min_{G\in \sigma(X^J)}\EE[(I(Y\leq z)-G)^2].
\end{equation}	
Therefore, for any $z\in \RR$, if $J_1 \subset J_2$, we have

\begin{equation} \label{eq:monotonicity}
	g_{J_1}(z)\geq g_{J_2}(z).
\end{equation} 
This implies that $R(J_1) \geq R(J_2)$.		

Next, note that if $J_2=J_1\cup \{j\}$ and $j\notin J^*$, then for any $z\in \RR$,
\begin{equation}
g_{J_1}(z)=g_{J_2}(z).
\end{equation}

This is because $\EE[I(Y\leq z)|\bb X^{J_2}]=\EE[I(Y\leq z)|\bb X^{J_1}]$ by the independence of $Y$ and $X_j$. In this case $R(J_1)=R(J_2)$.

Finally, we show that if $J_2=J_1\cup \{j\}$, where $j \in J^*$ then $R(J_1) > R(J_2)$. We prove by contradiction. If not, then $R(J_1) = R(J_2)$, which means in view of \eqref{eq:monotonicity} that 
$g_{J_1}(z)=g_{J_2}(z)$, for all  $z\in \RR\setminus C$, where $C$ has zero Lebesgue measure.

From here we denote $I(Y\leq z)$ by $I_z$ to simplify notation. Expanding the squares and using the tower property of conditional expectation we see that
\begin{align*}
	g_{J_1}(z) &- g_{J_2}(z) =
	\EE \left[ \left( I_z - \EE[I_z|\bb X^{J_1} ]\right)^2 - \left( I_z - \EE[I_z|\bb X^{J_2}] \right)^2 \right]\\
	&= \EE \left[ -2 I_z\EE[I_z|\bb X^{J_1} ]  + \EE[I_z|\bb X^{J_1} ]^2 +  2 I_z\EE[I_z|\bb X^{J_2} ]  - \EE[I_z|\bb X^{J_2} ]^2 \right]\\
	&= \EE \left[ \EE \left[ -2 I_z\EE[I_z|\bb X^{J_1} ]  + \EE[I_z|\bb X^{J_1} ]^2 +  2 I_z\EE[I_z|\bb X^{J_2} ]  - \EE[I_z|\bb X^{J_2} ]^2\right| \left. \bb X^{J_2} \right] \right]\\
	&= \EE \left[ (\EE[I_z|\bb X^{J_1}] -  \EE[I_z|\bb X^{J_2}])^2 \right] = 0.
\end{align*} 

From this we conclude that $\EE[I_z|\bb X^{J_1}] = \EE[I_z|\bb X^{J_2}]$ for all $z\in \RR\setminus C$. This implies that $\EE[I_z|\bb X^{J_2}]$ is $\sigma(X_{J_1})$ measurable which contradicts our assumption, hence $R(J_1) > R(J_2)$. 

We can now observe that the forward sets are built by adding variables from $J^*$ until all variables of $J^*$ have been added, therefore $J^0 = J^*$. 
	\end{proof}
\end{thm}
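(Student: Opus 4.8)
The plan is to work entirely in the Hilbert space $L^2(\Omega,\sigma(X_1,\dots,X_d,\epsilon),\mu)$ with inner product $(Z_1,Z_2)=\EE(Z_1Z_2)$, exploiting that $\EE[\,\cdot\mid \bb X^J]$ is the orthogonal projection onto the closed subspace $L^2(\Omega,\sigma(\bb X^J),\mu)$. Writing $I_z\defeq I(Y\le z)$ and
\begin{equation*}
g_J(z)=\EE\big[(I_z-\EE[I_z\mid \bb X^J])^2\big],\qquad R(J)=\int_{-\infty}^{\infty}g_J(z)\,\mbox{d}z,
\end{equation*}
I would first record three structural facts. First, \emph{monotonicity}: if $J_1\subseteq J_2$ then the projection onto the larger subspace is at least as close, so $g_{J_1}(z)\ge g_{J_2}(z)$ pointwise in $z$ and hence $R(J_1)\ge R(J_2)$. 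Second, \emph{noise-invariance}: if $j\notin J^*$ then $X_j$ is independent of $(\bb X^{J^*},\epsilon)$, hence of $Y$ and of every $\bb X^{J}$ with $j\notin J$, so $\EE[I_z\mid \bb X^{J\cup\{j\}}]=\EE[I_z\mid \bb X^{J}]$ and adding $j$ leaves $R$ unchanged. Third, \emph{strict improvement}: if $j\in J^*$ is added to a set whose relevant part does not already contain $j$, then $R$ strictly decreases; this is the only place the measurability hypothesis enters.

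The engine for the third fact is the Pythagorean identity already used for nested sets: expanding the squares and conditioning on the larger $\sigma$-algebra gives
\begin{equation*}
g_{J_1}(z)-g_{J_2}(z)=\EE\big[(\EE[I_z\mid \bb X^{J_1}]-\EE[I_z\mid \bb X^{J_2}])^2\big]\ge 0.
\end{equation*}
I would then reduce to subsets of $J^*$: by noise-invariance, $\EE[I_z\mid \bb X^{J}]=\EE[I_z\mid \bb X^{J\cap J^*}]$ for every $J$, so $R(J)$ depends on $J$ only through $I\defeq J\cap J^*$. For the strict-improvement step with $J_2=J_1\cup\{j\}$ and $j\in J^*\setminus J_1$, set $I=J_1\cap J^*$ and $J=I\cup\{j\}$, so that $I\subsetneq J\subseteq J^*$. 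If $R(J_1)=R(J_2)$, then $g_{J_1}=g_{J_2}$ for Lebesgue-almost every $z$, and the displayed identity forces $\EE[I_z\mid \bb X^{J}]=\EE[I_z\mid \bb X^{I}]$ for almost every $z$; but the right-hand side is $\sigma(\bb X^I)$ measurable, contradicting the hypothesis that $\EE[I_z\mid \bb X^J]$ fails to be $\sigma(\bb X^I)$ measurable on a set of positive Lebesgue measure. Hence $R(J_1)>R(J_2)$.

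With these three facts the two conclusions follow by bookkeeping. For the characterization of $J^*$, the reduction shows $R(J)=R(J\cap J^*)$, and iterating strict improvement along any chain from $I\subsetneq J^*$ up to $J^*$ gives $R(I)>R(J^*)$; thus $m_R=R(J^*)$ and the risk-minimizers are exactly the sets $J$ with $J\supseteq J^*$, among which $J^*$ is the unique one of minimal cardinality, establishing \eqref{eq: J*} and its uniqueness. For the forward procedure \eqref{eq:forward-sets}, I would argue inductively that as long as $J_{j-1}\not\supseteq J^*$ there is a relevant index in $J^*\setminus J_{j-1}$ whose addition strictly lowers the risk, while every noise index leaves it unchanged; since the strict value beats the unchanged value, the $\argmin$ must select a relevant index. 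Thus each of the first $|J^*|$ steps adds a distinct element of $J^*$, so $J_{|J^*|}=J^*$ attains $m_R$, the risks $R(J_0),\dots,R(J_{|J^*|})$ are strictly decreasing, and $R(J_j)=m_R$ for $j\ge|J^*|$; consequently $J^o=J_{|J^*|}=J^*$.

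The main obstacle is the correct handling of the measurability hypothesis, which is quantified only over pairs $I\subsetneq J\subseteq J^*$. The delicacy is twofold: one must first pass from an arbitrary working set $J_1$, which may contain noise coordinates, to its relevant part $I=J_1\cap J^*$ so that the hypothesis becomes applicable, and one must control the almost-everywhere-in-$z$ exceptional null sets when converting the integral equality $R(J_1)=R(J_2)$ into the pointwise statement $\EE[I_z\mid \bb X^{J}]=\EE[I_z\mid \bb X^{I}]$ needed to contradict non-measurability on a positive-measure set of $z$. Once these two reductions are in place, the remaining combinatorial argument that the greedy chain recovers $J^*$ is routine.
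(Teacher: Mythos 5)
Your proposal is correct and follows the same skeleton as the paper's own proof: the Hilbert-space view of conditional expectation as orthogonal projection, the pointwise functions $g_J(z)$, the three structural facts (monotonicity, invariance under adding noise coordinates, strict decrease when adding an element of $J^*$), the Pythagorean identity converting equality of risks into equality of conditional expectations, and the greedy bookkeeping at the end. The one substantive difference is in your favor. The paper applies the non-measurability hypothesis directly to the pair $J_1$ and $J_2=J_1\cup\{j\}$, even though $J_1$ may contain noise coordinates, whereas the hypothesis is quantified only over pairs $I\subsetneq J\subseteq J^*$; strictly read, that contradiction step falls outside the hypothesis's scope. Your reduction $R(J)=R(J\cap J^*)$, obtained by iterating noise-invariance so that $\EE[I_z\mid \bb X^{J}]=\EE[I_z\mid \bb X^{J\cap J^*}]$, passes to the relevant parts $I=J_1\cap J^*$ and $J=I\cup\{j\}$ before invoking the hypothesis, which is exactly what is needed to make this step airtight. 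You are also more explicit than the paper about the final bookkeeping: that the minimizers of $R$ are precisely the supersets of $J^*$ (which yields the uniqueness claim in \eqref{eq: J*}), and that in the forward recursion \eqref{eq:forward-sets} the strict decrease forces the $\argmin$ to pick elements of $J^*$ until $J_{|J^*|}=J^*$, after which the risk is flat, so $J^o=J^*$. In short: same route, but your version closes a genuine (if minor) gap in the published argument and spells out the combinatorial conclusion that the paper leaves implicit.
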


\begin{rem}
	The assumption : $\EE[I(Y\leq z)|\bb X^J]$ is not $\sigma(\bb X^I)$-measurable for any $I\subsetneq J \subseteq J^*$, is used to prove the uniqueness of  $J^*$. As we know that $R(J^*) = R(J^* \cup \{ j \})$ for $j \notin J^*$, there are many sets, which have minimal risk in population sense. The assumption essentially ensures that $J^*$ does contain only indices $j$ such that the function $h$ is not constant for $x_j$ almost everywhere with respect of the distribution of $X$.
\end{rem}

\begin{rem}
	The choice of the CRPS loss function is motivated by our application. Though for different loss functions $L$ that focus on a specific part of the conditional distribution, the procedure explained in this section could still be applied.
\end{rem}

\section{Forward selection using random forests}
\label{sec:method}

We use a random forest to estimate the conditional distribution function ${F}_{Y|\bb X^J}$ and the risk. Now, we make a little  excursion to explain the random forest algorithm. We follow the tree construction algorithm proposed in \cite{Wager2018} and the extension for quantile estimation from \cite{Athey2019}. We choose this approach because it is the only approach that makes splits based on a quantile criterion, additionally in \cite{Athey2019} asymptotic normality  for the quantile estimates is established.

\subsection{Intermezzo: Random Forests}
\label{Intermezzo: Random Forests}

Denote the data set by $(\bb X_1,Y_1), \ldots ,(\bb X_n,Y_n)$. A random forest is defined as a collection of trees. Each tree $T$ is obtained by recursively splitting a set of observations by making axis-aligned splits in the covariate space, meaning a split is made on a single covariate value at a time. As a result, every tree induces a partitioning of the covariate space in possibly semi-infinite hyper rectangles. Denote the conditional quantile function by $Q_{Y|\bb X}(\tau|\cdot)$, where $\tau \in (0,1)$  denotes a probability level. In this section we focus on fitting a forest in order to estimate the function $Q_{Y|\bb X}(\tau|\cdot )$. The estimation procedure for $Q_{Y|\bb X^{J}}(\tau|\bb X^{J})$ works exactly the same by  fitting a forest based on $\{(\bb X^{J}_1,Y_1), \ldots ,(\bb X^{J}_n,Y_n) \}$.

Recurrent splits are made starting with parent node $P$, a node in the current partition, creating two child nodes $C_1$ and $C_2$, such that $P = C_1 \cup C_2$ and $C_1 \cap C_2 = \emptyset$. This split should be informative with respect to $Q_{Y|\bb X}(\tau|\cdot)$ and is chosen to maximize,
\begin{equation} \label{eq:split-crit}
	e(C_1,C_2) = \frac{n_{C_1}n_{C_2}}{n_P} (Q_{Y|\bb X}(\tau|\bb X \in C_1) - Q_{Y|\bb X}(\tau|\bb X \in C_2))^2,
\end{equation}
where $n_P$, $n_{C_1}$, $n_{C_2}$ are the number of observations $\bb X_i$ in each node. In practice this makes the the algorithm very slow as it requires the computation of two quantiles for each possible split. Instead in \cite{Athey2019} a relabelling step is proposed and defined as $\II(Y_i > Q_{Y|\bb X}(\tau|\bb X \in P))$ for the $\tau$ quantile. Now a standard regression split, as used in a standard random forest \cite{Breiman2001}, is made on the labels. This means to maximize the squared difference between the average label in both child nodes.

%Instead it is proposed to relabel the data with $0$ or $1$ depending on whether the observations $Y_i : X_i \in P$ is lower or higher than $Q_{Y|\bb X}(\tau|\bb x \in P)$. Then a split is made based on \eqref{eq:split-crit} where the quantile is substituted by an average over the labels in the corresponding child node, which is much faster. For detailed explanation we refer to \cite{Athey2019}.

The trees fitted in \cite{Wager2018} and \cite{Athey2019} are called honest trees and are slightly different from the standard structure of tree fitting. A tree is fit by first sub-sampling a set of indices from $\{1,\ldots ,n\}$ of size $s << n$ and then randomly splitting this sub-sample in two sets $\mathcal{I}$ and $\mathcal{J}$ both of size $s/2$ each. Recursive splits of $\RR^d$ are then made based on criterion \eqref{eq:split-crit}, with data points $(Y_i,\bb X_i) : i \in \mathcal{I}$. The tree becomes honest by  removing all the data points indexed by set $\mathcal{I}$ and using only the data points indexed by set $\mathcal{J}$ for estimation of $Q_{Y|\bb X}(\tau|\bb x)$ for a new observations $\bb X$. 

A random forest is then obtained by fitting $B$ trees. Denote by $l_b(\bb X)$ the leaf node of tree $b$ in which $\bb X$ falls. Then for $1\leq i\leq n$, the weight for $(\bb X_i, Y_i)$ induced by the $b$th tree is given by,

\begin{equation}
	w_{i,b}(\bb X) = \frac{\II( i \in \mathcal{J} ~~ \& ~~\bb X_i \in l_b(\bb X))}{\sum_{j\in \mathcal{J}} \II( \bb X_j \in l_b(\bb X))}, 
\end{equation}
where $\frac{0}{0}=0$.
%\begin{equation}
%	w_{i,b}(\bb x) = \frac{\II( i \in \mathcal{J} : \bb x_i \in l_b(\bb x))}{\sum_{j=1}^n \II( j \in \mathcal{J} : \bb x_j \in l_b(\bb x))}.
%\end{equation}
The forest weights are obtained by averaging the tree weights over the $B$ trees, $w_i(\bb x) = \frac{1}{B}\sum_{b=1}^B w_{i,b}(\bb X)$. An estimate of $\hat{Q}_{Y|\bb X}$ is then given by the locally weighted estimated quantile,

\begin{equation}\label{eq:quant_reg}
	\hat{Q}_{Y|\bb X}(\tau|\bb x) = \argmin_{\theta} \sum_{i=1}^n w_i(\bb x)\rho_{\tau}(Y_i - \theta),
\end{equation}
with $\rho_{\tau}(u) = u(\tau - \II(u < 0))$ the quantile check function. Note that the structure is similar to kernel regression, but instead of a deterministic kernel with bandwidth $h$ the weights are determined by the data via the forest. Random forests are sometimes called adaptive nearest neighbour estimators for this reason.

In the variable selection procedure we aim to select variables that are predictive for the conditional distribution. Therefore, instead of building random forests with respect to a single $\tau$ quantile, consider a sequence of quantiles $0 < \tau_1, \ldots, \tau_K < 1$. This needs a different type of relabelling than for a single quantile as explained above. They define the relabelling then by,

\begin{equation*} \label{eq:relabeling}
	Z_i = \sum_{k=1}^{K} I( Y_i \leq \hat{Q}_{Y|\bb X}(\tau_k|\bb X \in P)). 
\end{equation*}
The best split is then chosen to maximize the following multi class classification rule:
\begin{equation*} \label{eq:multi-classification}
	\hat{e}(C_1,C_2) = \frac{\sum_{k=1}^{K}\left[ \sum_{X_i \in C_1} I(Z_i = k)\right]^2}{n_{C_1}} + \frac{\sum_{k=1}^{K}\left[ \sum_{X_i \in C_2} I(Z_i = k)\right]^2}{n_{C_2}}.
\end{equation*}

\subsection{Estimation of predictive loss} \label{outb}
The main quantity in the theoretical framework from Section \ref{sec:mathematical-set-up} is the CRPS risk. To make use of the random forest quantile estimator, we use an equivalent expression of the CRPS loss (\eqref{eq:crps}), that is, 
$\int_{-\infty}^{\infty} \left( I(Y \leq z) - F_{Y|\bb X^J}(z|\bb X^J)\right)^2\mbox{d}z= 2\int_{0}^{1} \rho_{\tau}(Y - {Q}_{Y|\bb X^{J}}(\tau|\bb X^{J})) \mbox{d}\tau$.
The equivalence of these two definitions is shown in the appendix. Plugging in the estimated quantile function, we obtain the  following targeted loss in the estimation context: 
%In this subsection we discuss the estimation of this quantity and start indicating estimated quantities by a hat. In \eqref{eq:crps} the definition of the CRPS loss is given, but here we use the following  equivalent expression of the loss, which is given in terms of the conditional quantile function that will make it easier to calculate with the quantile estimates of the random forest, 
\begin{equation}\label{eq:crps-quant}
	L(Y,\bb X^{J}, \hat{Q}_{Y|\bb X^{J}})  = 2\int_{0}^{1} \rho_{\tau}(Y - \hat{Q}_{Y|\bb X^{J}}(\tau|\bb X^{J})) \mbox{d}\tau.
\end{equation}
Here we denote $\hat{Q}_{Y|\bb X^{J}}$ as the random forest estimator of the conditional quantile function with respect to the dataset $\{(\bb X^{J}_1,Y_1), \ldots ,(\bb X^{J}_n,Y_n) \}$ and with two arguments, a probability level $\tau$ and the covariate vector $\bb X^J$.

%Here\\ $\hat{Q}_{Y|\bb X^{J}}(\tau|\bb X^{J})$ is a random forest estimator based on .

 A naive way to estimate the expected loss (that is the expectation of \eqref{eq:crps-quant}), would be considering 
$$
\frac{2}{n}\sum_{i=1}^n\int_{0}^{1} \rho_{\tau}(Y_i - \hat{Q}_{Y|\bb X^{J}}(\tau|\bb X_i^{J})) \mbox{d}\tau.
$$
However, this would lead to over-fitting because the training set (data for estimating ${Q}_{Y|\bb X^{J}}$) are the same as the testing set (data for estimating the expectation). This problem can be circumvented  by using so called out-of-bag samples as test set.   

The out-of-bag samples for the $b$th tree are defined as the samples that are not used for generating the tree. For each observation $(\bb X^{J}_i,Y_i)$, a sub forest $\mathcal{F}_i $ is defined by $\mathcal{F}_i = \{ T_b : i \notin (\mathcal{I}_b \cup \mathcal{J}_b) \}$. Namely, this sub forest consists of trees for which $(\bb X^J_i,Y_i)$ is out-of-bag.  Observe that the number of trees in $\mathcal{F}_i $ is random and hence not necessarily the same for all $i$. The expected number of trees for each sub forest is $B\left(1-\frac{s}{n}\right)$.

 %i.e. $\mathcal{K}_b = \{(\bb x_i ,Y_i) \in \mathcal{D}\setminus (\mathcal{I}_b \cup \mathcal{J}_b\}$. By construction $\mathcal{K}$ has $n-s$ out-of-bag samples.

%Then for $s$ sufficiently small and $B$ sufficiently large we can define for each observation $(\bb x_i,Y_i)$ a sub forest of trees $T_b$ for which $(\bb x_i,Y_i)$ %is out-of-bag, $\mathcal{F}_i = \{ T_b : (\bb x_i,Y_i) \in \mathcal{K} \}$.

We use the sub forest $\mathcal{F}_i$ to estimate the conditional quantile function and denote it with $\hat{Q}_{Y|\bb X}^{\mathcal{F}_i}(\tau|\bb X^J)$. Since the trees in sub forest $\mathcal{F}_i$ do not use observation $(\bb X^{J}_i,Y_i)$, we use this quantile estimator to evaluate the CRPS loss for $(\bb X^{J}_i,Y_i)$.
Doing this for all observations, we obtain the estimated CRPS risk given by,  
%The CRPS risk based on out-of-bag samples is estimated by,
\begin{equation}\label{eq:pred-risk-estimation}
	\hat{R} (J):= \frac{2}{n} \sum_{i=1}^n \int_{0}^{1} \rho_{\tau}(Y_i - \hat{Q}^{\mathcal{F}_i}_{Y|\bb X^J}(\tau|\bb X_i^J)) \mbox{d}\tau.
\end{equation}
In the sequel, we write $\hat{Q}^{\mathcal{F}_i}_{Y|\bb X^J}(\tau|\bb X_i^J)=\hat{Q}^{\mathcal{F}_i}(\tau|\bb X_i^J)$ for simplicity.

This out-of-bag procedure for estimating risk  has similarities to leave-one-out cross validation. For validating the $i$th observations we use all trees which do not use the $i$th observation. The difference is that sub forests have in expectation the same size, but not exactly. Computationally the out-of-bag sample approach is also much faster compared to leave-one-out cross validation. Note that a tree has $n-s$ out-of-bag samples and hence the tree is used is used in $n-s$ sub-forests. On the other hand leave one out cross validation does not reuse trees and estimates a new forest for each element in the summation of \eqref{eq:pred-risk-estimation}.

%It is important to note that although this appears to be similar to leave one out cross validation, the size of the sub forest is not the same for each observation. The gain of the out-of-bag samples is that by this construction only one big forest has to be estimated and trees can be reused in several sub forests to compute the CRPS loss, making the entire procedure faster than a cross validation procedure.

\subsection{One step forward} \label{subsec:forward}
The forward variable selection  sequentially adds variables such that the predictive loss is minimized. We  here explain how each step is performed.
Recall that for a index set $J$, the estimated risk $	\hat{R} (J)$ is given by \eqref{eq:pred-risk-estimation}. 
% Let us denote the estimated risk for a index set $J$ by $\hat{R}(J)$ where we estimate it by \eqref{eq:pred-risk-estimation} and the random forest is estimated with predictor set indexed by $J$.
Suppose that we have selected the first $j-1$ variables with indices in $\hat{J}_{j-1}$. Then the $j$th variable $X_{\hat{i}_j}$ is selected based on
\begin{equation}\label{eq:set-estimation}
	\hat{i}_{j} =   \argmin_{q \notin \hat{J}_{j-1}} \hat{R}(\hat{J}_{j-1} \cup \{ q\}). 
\end{equation}
and $\hat{J}_j = \hat{J}_{j-1} \cup \{ \hat{i}_j \}$. The procedure of a single step forward is detailed in Algorithm \ref{alg:one-step}.

\begin{algorithm}
\SetAlgoLined
\KwResult{$\hat{i}$, $\hat{R}(J \cup \{ q\}) : q \notin J$}
	Define data $(Y_1,\bb X_1), \ldots (Y_n,\bb X_n)$\;
	Define set $J \subset \{1,\ldots p\}$\;
 \For{ $q \notin J$ }{
	construct a forest with $(Y,\bb X^{J \cup \{q\}})$\;
	Calculate $\hat{R}(J \cup \{q\})$\;
 } 
$\hat{i} = \argmin_{q} \hat R(J \cup \{q\})$
\caption{\label{alg:one-step} A forward step}
\end{algorithm}

\subsection{Stopping on time}
Motivated by the result in Theorem \ref{thm:ordering}, we stop selecting variables when there is no further decrease in CRPS risk. From the proof of Theorem \ref{thm:ordering}, adding variables that are not in $J^*$ does not have an effect on the CRPS risk. In practice, where we are working with finite samples, additional covariates decrease in fact the statistical efficiency of the random forest which leads to higher CRPS values. Because of the random component in the forest procedure, different forests will have different risk. In general this can be avoided by fitting an enormous number of trees to reduce the random component, but in practice this is infeasible. Instead we use the randomness to test the following hypothesis at each step,

%From Theorem \ref{thm:ordering} we see that under the assumption of independence the sequence of sets $J_j$ $0 \leq j \leq d$ first adds variables that are contained in $J^*$. This means that as soon as we stop detecting a decrease in the CRPS the variables in $J^*$ are selected and we can stop selecting more variables. From Section \ref{sec:mathematical-set-up} we saw that in population sense adding variables that are not in $J^*$ do not have an effect on the CRPS. In practice, where we are working with finite samples, additional covariates decrease in fact the statistical efficiency of the random forest which leads to lower CRPS values. Because of the random component in the forest procedure, different forests will have different risk. In general this can be avoided by fitting an enormous amount of trees to reduce the random component, but in practice this is infeasible. Instead we use the randomness to test the following hypothesis at each step,

\begin{align*}
	H_0: &\mbox{   } R(\hat{J}_{j-1}) - R(\hat{J}_{j}) = 0\\
	H_A: &\mbox{   } R(\hat{J}_{j-1}) - R(\hat{J}_{j})  > 0.
\end{align*}

The fitted forests at $j$-th and $(j+1)$-th steps are used to obtain several estimates of $R(\hat{J}_{j-1}) - R(\hat{J}_{j})$. More precisely, we estimate this difference by $\hat{R}(\hat{J}_{j-1} \cup \{ q \})-\hat{R}(\hat{J}_{j} \cup \{ q \})$, where $q \notin \hat{J}_{j}$. Note that $\hat{R}(\hat{J}_{j-1} \cup \{ q \})$ is computed at the $j$-th step for identifying 	$\hat{i}_{j}$ and $\hat{R}(\hat{J}_{j} \cup \{ q \})$ at the $(j+1)$-th  step for identifying $\hat{i}_{j+1}$. So, the testing procedure does not require any extra forest fitting. We propose the following test statistics:
%Given $\hat{J}_{j}$ is known, then for each $q \notin \hat{J}_{j}$ we have estimated both $\hat{R}(\hat{J}_{j-1} \cup \{ q \})$ and $\hat{R}(\hat{J}_{j} \cup \{ q \})$. The difference of these two sets is exactly the variable that we selected in the $j$th step. Therefore we use the difference of these two estimates as an estimate of $R(\hat{J}_{j-1}) - R(\hat{J}_{j})$, which leads to the following test-statistic,

\begin{equation} \label{eq:test-statistic}
	W_q = \sum_{q \notin \hat{J}_j}  \II(\hat{R}(\hat{J}_{j-1} \cup \{q\}) - \hat{R}(\hat{J}_j \cup \{q\}) > 0 ).
\end{equation}
Under the null hypothesis, the variable added on the $j$th step does not contribute to the predictive performance of the model. As a result both risks are asymptotically equal (see the proof for Theorem \ref{thm:consistency_test}), meaning that the test-statistic approximately has a binomial distribution, $\mbox{Bin}(M_j, 0.5)$, where $M_j=d-|\hat J_j|$. We  reject $H_0$ if $W > C^j_{1-\alpha}$, where $C^j_{1-\alpha}$ is the $1-\alpha$ quantile of $\mbox{Bin}(M_j, 0.5)$. The consistency of this test is established in the theorem below.

\begin{thm} \label{thm:consistency_test}
Assume that for any $\tau \in (0, 1)$, as $n\rightarrow \infty$,
\begin{equation}
\frac{1}{n}\sum_{i=1}^n\left| \hat{Q}^{\mathcal{F}_i}(\tau|\bb X_i^J) -Q(\tau|\bb X_i^J)\right| \overset{p}{\rightarrow} 0, \label{eq:consistency_Q}
\end{equation}
where $J=\hat{J}_{j}  \cup \{q\}$ or $J=\hat{J}_{j-1}  \cup \{q\}$, $q \notin \hat{J}_j$. Then, under the assumptions of Theorem \ref{thm:ordering},
\begin{equation}
\PP(W>C^j_{1-\alpha}) \rightarrow 1 ~ ~\mbox{Under hypothesis }H_A,
\end{equation}
as $n\rightarrow \infty$.
\end{thm}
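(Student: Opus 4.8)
The plan is to prove the test is consistent under $H_A$ by showing that each summand of $W$ equals $1$ with probability tending to one, so that $W$ concentrates at its maximal value $M_j = d - |\hat J_j|$ and therefore exceeds the fixed threshold $C^j_{1-\alpha}$. Fixing the realized selected sets $\hat J_{j-1}$ and $\hat J_j = \hat J_{j-1}\cup\{\hat i_j\}$ (there are finitely many subsets of $\{1,\dots,d\}$, so one may condition on their values and argue on each), it suffices to show that for every $q\notin\hat J_j$,
\begin{equation*}
\PP\bigl(\hat R(\hat J_{j-1}\cup\{q\}) - \hat R(\hat J_j\cup\{q\}) > 0\bigr)\to 1.
\end{equation*}
I would obtain this from two ingredients: (i) consistency of the risk estimator, $\hat R(J)\overset{p}{\rightarrow} R(J)$ for each relevant $J$; and (ii) a strict population inequality $R(\hat J_{j-1}\cup\{q\}) > R(\hat J_j\cup\{q\})$ valid under $H_A$ for every $q$. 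Combining (i) and (ii) through Slutsky shows the estimated difference converges in probability to a strictly positive constant, so the probability that it is positive tends to one; a union bound over the finitely many $q$ then yields $\PP(W = M_j)\to 1$, and since the $1-\alpha$ quantile $C^j_{1-\alpha}$ of $\mathrm{Bin}(M_j,0.5)$ lies strictly below $M_j$, the claim follows.

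For ingredient (i) I would compare $\hat R(J)$ with an oracle version $\tilde R(J) = \tfrac{2}{n}\sum_{i=1}^n\int_0^1 \rho_\tau(Y_i - Q(\tau|\bb X_i^J))\,\mathrm{d}\tau$ built from the true quantile function. The terms of $\tilde R(J)$ are i.i.d.\ with mean $R(J)$, finite because $\EE[Y^2]<\infty$, so the weak law of large numbers gives $\tilde R(J)\overset{p}{\rightarrow} R(J)$. For the remainder I would use that the check function is $1$-Lipschitz, $|\rho_\tau(u)-\rho_\tau(v)|\le|u-v|$, together with Fubini to bound
\begin{equation*}
|\hat R(J) - \tilde R(J)| \le 2\int_0^1 \Bigl(\tfrac{1}{n}\sum_{i=1}^n \bigl|\hat Q^{\mathcal F_i}(\tau|\bb X_i^J) - Q(\tau|\bb X_i^J)\bigr|\Bigr)\mathrm{d}\tau,
\end{equation*}
whose inner average converges to $0$ in probability for each fixed $\tau$ by assumption \eqref{eq:consistency_Q}.

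Ingredient (ii) I would reduce to Theorem \ref{thm:ordering}. By independence of the noise coordinates (those outside $J^*$) from $Y$ and from $\bb X^{J^*}$, the conditional law of $Y$ given $\bb X^J$ depends only on $\bb X^{J\cap J^*}$, hence $R(J)=R(J\cap J^*)$ for every $J$. Writing $A=\hat J_{j-1}\cap J^*$, the hypothesis $H_A$ forces $\hat i_j\in J^*$ together with $R(A) > R(A\cup\{\hat i_j\})$. If $q\notin J^*$ then $R(\hat J_{j-1}\cup\{q\}) - R(\hat J_j\cup\{q\}) = R(A) - R(A\cup\{\hat i_j\})>0$, which is exactly $H_A$; if $q\in J^*$ then the difference equals $R(A\cup\{q\}) - R(A\cup\{q,\hat i_j\})$, which is strictly positive by the strict-reduction step in the proof of Theorem \ref{thm:ordering} applied to $A\cup\{q\}\subsetneq A\cup\{q,\hat i_j\}\subseteq J^*$ (this is precisely where the measurability assumption enters). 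Either way (ii) holds.

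The main obstacle is the interchange of the $\tau$-integral with the limit in ingredient (i): assumption \eqref{eq:consistency_Q} provides only pointwise-in-$\tau$ convergence in probability of the averaged quantile error, and passing this through $\int_0^1(\cdot)\,\mathrm{d}\tau$ requires a dominated-convergence or uniform-integrability argument. The delicate region is $\tau\to 0$ and $\tau\to 1$, where the quantile functions may be unbounded; I would control it by noting that the forest estimates remain within the range of the observed responses and by using $\EE[Y^2]<\infty$ to dominate the tails, thereby justifying that the right-hand side above tends to $0$ in probability and completing the proof of (i).
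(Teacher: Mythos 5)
Your proposal follows essentially the same route as the paper's proof: the same decomposition of $\hat R(\hat J_{j-1}\cup\{q\})-\hat R(\hat J_j\cup\{q\})$ into an oracle i.i.d.\ sample mean handled by the law of large numbers plus estimation-error terms bounded via the (Lipschitz) behaviour of the check function (the paper uses Knight's identity for the same bound, with constant $2$ instead of your $1$), combined with the strict population inequality drawn from Theorem \ref{thm:ordering}, and your union-bound conclusion $\PP(W=M_j)\to 1$ is interchangeable with the paper's argument via $\EE[W]\to M_j$. The one place you go beyond the paper is the ``main obstacle'' you flag: the paper passes the pointwise-in-$\tau$ assumption \eqref{eq:consistency_Q} through the integral $\int_0^1(\cdot)\,\mbox{d}\tau$ with no justification, so your dominated-convergence/tail-control step addresses a gap that the published argument silently skips.
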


\begin{proof}
It suffices to prove that under $H_A$, as $n\rightarrow \infty$
$$
\EE[W]\rightarrow \omega_0,
$$
where $\omega_0>C_{1-\alpha}^j$.\\
Denote  $I_q:=\hat{J}_{j-1} \cup \{q\}$ and $K_q:=\hat{J}_{j} \cup \{q\}$. Then, we have 
\begin{align*}
&\hat R(I_q)-\hat R(K_q)\\
=&\frac{2}{n} \sum_{i=1}^n \left(\int_{0}^{1} \rho_{\tau}(Y_i - \hat{Q}^{\mathcal{F}_i}(\tau|\bb X_i^{I_q})) \mbox{d}\tau
-\int_{0}^{1} \rho_{\tau}(Y_i - \hat{Q}^{\mathcal{F}_i}(\tau|\bb X_i^{K_q})) \mbox{d}\tau\right)\\
=&\frac{2}{n} \sum_{i=1}^n \left(\int_{0}^{1} \rho_{\tau}(Y_i - \hat{Q}^{\mathcal{F}_i}(\tau|\bb X_i^{I_q})) \mbox{d}\tau
-\int_{0}^{1} \rho_{\tau}(Y_i - Q(\tau|\bb X_i^{I_q})) \mbox{d}\tau\right)\\
&+\frac{2}{n} \sum_{i=1}^n \left(\int_{0}^{1} \rho_{\tau}(Y_i - Q(\tau|\bb X_i^{K_q})) \mbox{d}\tau
-\int_{0}^{1} \rho_{\tau}(Y_i -\hat{Q}^{\mathcal{F}_i}(\tau|\bb X_i^{K_q})) \mbox{d}\tau\right)\\
&+\frac{2}{n} \sum_{i=1}^n \left(\int_{0}^{1} \rho_{\tau}(Y_i - Q(\tau|\bb X_i^{I_q})) \mbox{d}\tau
-\int_{0}^{1} \rho_{\tau}(Y_i - Q(\tau|\bb X_i^{K_q})) \mbox{d}\tau\right)\\
=:&S_1+S_2+S_3.
\end{align*} 
Applying the Knight's identity, $\rho_\tau(u-v)-\rho_\tau(u)=-v(\tau-I(u<0))+\int_0^v(I(u\leq s)-I(u\leq 0))\mbox{d}s$, which implies that $|\rho_\tau(u-v)-\rho_\tau(u)|\leq 2 |v|$, we have 
\begin{align*}
|S_1|\leq \frac{4}{n}\sum_{i=1}^n\int_0^1 \left| \hat{Q}^{\mathcal{F}_i}(\tau|\bb X_i^{I_q}) -Q^{\mathcal{F}_i}(\tau|\bb X_i^{I_q})\right| \mbox{d}\tau \overset{p}{\rightarrow} 0,
\end{align*}
by \eqref{eq:consistency_Q}. The same result holds for $S_2$.\\
Observe that $S_3$ is the sample mean of I.I.D. random variables with expectation $R(I_q)-R(K_q)$. Applying law of large number, $S_3\overset{p}{\rightarrow} R(I_q)-R(K_q)$. Combing with the results for $S_1$ and $S_2$, we have 
\begin{align*}
\hat R(I_q)-\hat R(K_q)\overset{p}{\rightarrow} R(I_q)-R(K_q).
\end{align*}
Under $H_a$, $\hat i_j \in J^*$, thus, by the proof for Theorem \ref{thm:ordering}, for all $q \notin \hat{J}_{j}$, 
$$
R(I_q)-R(K_q)>0.  \footnote{{ Obviously under $H_0$, $R(I_q)-R(K_q)=0$.}}
$$
This implies that 
\begin{align*}
\EE[W]=\sum_{q \notin \hat{J}_j}  \PP\left(\hat R(I_q) - \hat{R}(K_q) > 0 \right)\overset{p}{\rightarrow} M_j >B_{1-\alpha}.
\end{align*}
\end{proof}

In practice, the integration in \eqref{eq:pred-risk-estimation} is numerically approximated. Let $\tau_t = \frac{t}{k+1}, t = 1,\ldots , k$, where $k$ is 
a pre-specified integer. The estimated risk $\hat R(J)$ in \eqref{eq:pred-risk-estimation} is approximated by 
%the CRPS loss function has to be approximated. Therefore define $\tau_t = \frac{t}{k+1}, t = 1,\ldots , k$ a equally spaced probability sequence, then the estimated loss is given by,
\begin{equation}\label{eq:crps-estimation}
\hat R(J) =  \frac{2}{k} \sum_{t = 1}^k \rho_{\tau_t}(Y_i - \hat{Q}_{Y|\bb X}(\tau_t|\bb X)).
\end{equation}

The complete procedure is given in Algorithm \ref{alg:FVS}.

\begin{algorithm}
\SetAlgoLined
\KwResult{$J^o$}
	Set data $(Y_1,\bb X_1), \ldots (Y_n,\bb X_n)$\;
	Set $j = 1$, $J_0 = \emptyset$, $\alpha$\;
	Calculate $J_1$ with Algorithm \ref{alg:one-step} using $J = J_0$\;
	\Repeat{$W_j \leq C_{1-\alpha}^j$ | $j == p$}{
		$j=j+1$\;
		Calculate $J_j$ with Algorithm \ref{alg:one-step} using $J = J_{j-1}$\;
	 	Calculate $W_j$ from equation \ref{eq:test-statistic}\;
	 }
	 $J^o = J_{j-1}$\;
\caption{\label{alg:FVS} Forward variable selection}
\end{algorithm}

\section{Comparison based on simulation}
\label{sec:sim}

In this section we compare the performance of our variable selection procedure with the backward selection based on a permutation measure with a mean squared error criterion proposed in \cite{Gregorutti2017}; details of the method are stated later in this section. We compare with this method as it is currently the only method that deals with correlated predictors for random forests and we will refer to it as the backMSE method. For the comparison we simulate data from the following model,
\begin{equation}
	Y = \mu(\bb X) + \sigma(\bb X)\epsilon,
\end{equation}
where $\epsilon$ follows a standard normal distribution and independent of this, $\bb X \in \RR^{25}$ follows a multivariate normal distribution. For the covariance structure of $\bb X$ we split up the covariates into blocks $I_l = \{ (l-1)*5 + 1, \ldots ,(l-1)*5 +5 \}$ for $l=1,\ldots , 5$. The covariance function of $\bb X$ is then given by,

\begin{equation}
	\mbox{Cov}(X_{j},X_{i}) = 
	\begin{cases}
		1, &\mbox{ if } i = j;\\
		\rho, &\mbox { if } i,j \in I_l \mbox{ for the same } l;\\
		0, & \mbox{ otherwise.}
	\end{cases}
\end{equation}

The two selection methods are compared for $\rho \in \{0,0.4,0.8\}$. For the functions $\mu$ and $\sigma$  three different models are considered:
\begin{align*}
	&\mu_1(\bb X) = X_1 + \frac{X_6}{2} + \frac{X_{11}}{4}, 
	&\sigma_1(\bb X)& = 1;\\
	&\mu_2(\bb X) = X_1, 
	&\sigma_2(\bb X)& = \exp\left( \frac{X_6}{2} + \frac{X_{11}}{3} \right);\\
	\mbox{and  }
	&\mu_3(\bb X) =
	\begin{cases} 
		X_6^2, \mbox{ if } X_1 \geq 0,\\
		-X_6, \mbox{ if } X_1 < 0,
	\end{cases}
	&\sigma_3(\bb X)& = 
	\begin{cases} 
		2, \mbox{ if } X_{11} \geq 0,\\
		1, \mbox{ if } X_{11} < 0.
	\end{cases}
\end{align*}
The first model is a model where the covariates only influence the mean, in the second model the influence is mainly on the variance. The third model considers discontinuous covariate dependence in both mean and variance. Finally, we choose sample sizes $n \in \{500,1000,2500\}$. 

The backMSE method evaluates the relevance of a covariate by its permutation importance measure, which is defined as 
$$
I(X_j)=\EE\left[(Y-\EE(Y|\bb X_{(j)}))^2\right]-\EE\left[(Y-\EE(Y|\bb X))^2\right], 
$$
where $\bb X_{(j)}=(X_1,\ldots,X'_j,\ldots,X_d)$ such that $X'_j=^d X_j$ and $X'_j$ is independent of $Y$ and of the other covariates.  A large score of $I(X_j)$ indicates that covariate $X_j$ is 
important. The method randomly permutes the values of $X_j$ to mimic a random sample of $X'_j$.  An estimator of  $I(X_j)$ using out of bag samples is given in (2.1) in \cite{Gregorutti2017}.

%The backward variable selection from \cite{Gregorutti2017} makes use of the permutation variable importance measure. For variable $j$ this measure is denoted by $I_j$. Denote the data by $\bb Y \in \RR^n$ and $\bb X_M \in \RR^{n\times 25}$, where the $i$th row of $\bb X_M$ is denoted by $\bb X_i$. The permutation measure for variable $j$ computes the decrease in performance by permuting the $j$th column in the matrix $\bb X_M$, denoted by $\bb X_M^{(j)}$ with the $i$th row denoted by $\bb X^{(j)}$. For a given forest $\mathcal{F}$ with out-of-bag subforests $\mathcal{F}_i$ for $1\leq i \leq n$ the variable importance is given by,
%\begin{equation}
	%I_j = \frac{1}{n} \sum_{i=1}^n \left[ (\mu^{\mathcal{F}_i}(\bb X_i^{(j)}) - Y_i)^2 - (\mu^{\mathcal{F}_i}(\bb X_i) - Y_i)^2\right],
%\end{equation}
%where $\mu^{\mathcal{F}}(\bb x)$ denotes the estimated mean using forest $\mathcal{F}$ in point $\bb x$. Large values indicate a big increase in predictive error when the $j$th variable is permuted, hence suggesting the variable $j$ is important.

In \cite{Gregorutti2017} it is shown that the order of the permutation importance measures can not be naturally interpreted in the presence of correlation between the covariates, as variables that are correlated share their importance. As a result, the importance of the important variables is lower than it should be. The backMSE deals with this problem by iteratively removing the least important variable and refitting the model and calculating the importance scores. This process is repeated until no variables are left. The optimal model is then chosen as the model that minimizes the out-of-bag mean squared error. Why this works is easily seen with two highly correlated informative variables. Initially they do not seem important because they share their importance, but by removing one the importance is not shared any more. The left over variable shows the true importance and will therefore be in the selected set. 

%In order to determine the optimal model, it is important that the variable importance scores and mean squared errors are converged, i.e. their order does not change when a new forest model is fitted on the same data. Therefore 
It is recommended in \cite{Gregorutti2017} to compute several forests and take averages to stabilize the variable importance scores and the error estimates. We compute for each step 20 forests where each forest contains $2000$ trees. For this method, we follow the standard forest algorithm from \cite{Breiman2001}, fitting trees based on bootstrap samples of size $n$, $mtry$ is set to the default value for regression $p/3$ and taking a minimum leaf size of $5$.

For our method we also take fixed parameters with sub sampling fraction $s = 0.5$, a minimum node size of $1$, $mtry=p$ and $1000$ trees. We have tested the influence of these tuning parameters on several simulation models and the results are rather robust to different choices. Our selection model adds variables one at a time and stops when additional variables do not increase performance. As the model is therefore often small it makes sense to not over randomize by setting $mtry$ to smaller than $p$.  We advise to choose a small $s$ for large datasets in order to reduce computation time.

\begin{sidewaysfigure}
    \centering
    \subfloat[Model 1]{ \label{fig:model1} {\includegraphics[width=0.4\textwidth]{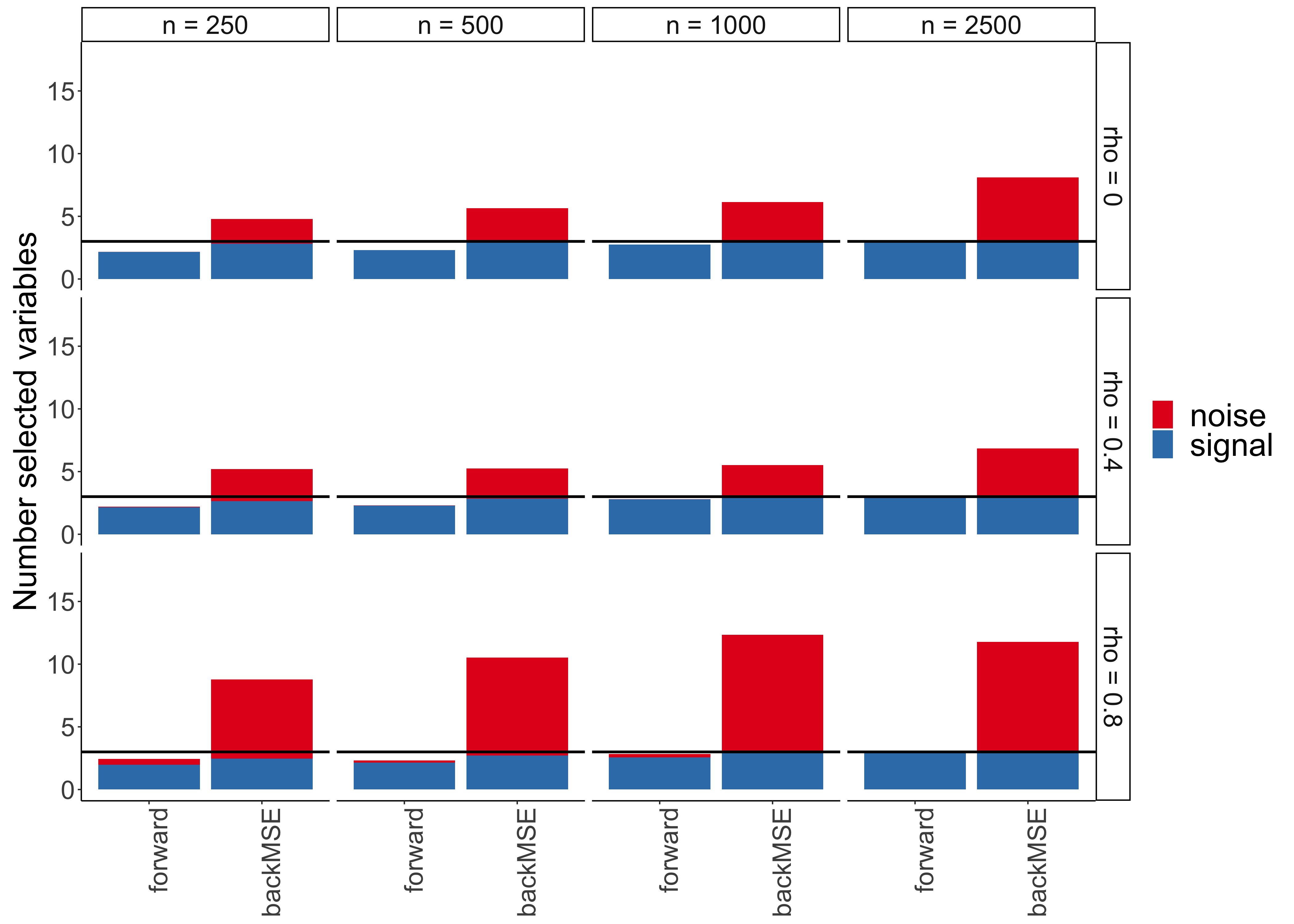} }}%
    \subfloat[Model 2]{ \label{fig:model2} {\includegraphics[width=0.4\textwidth]{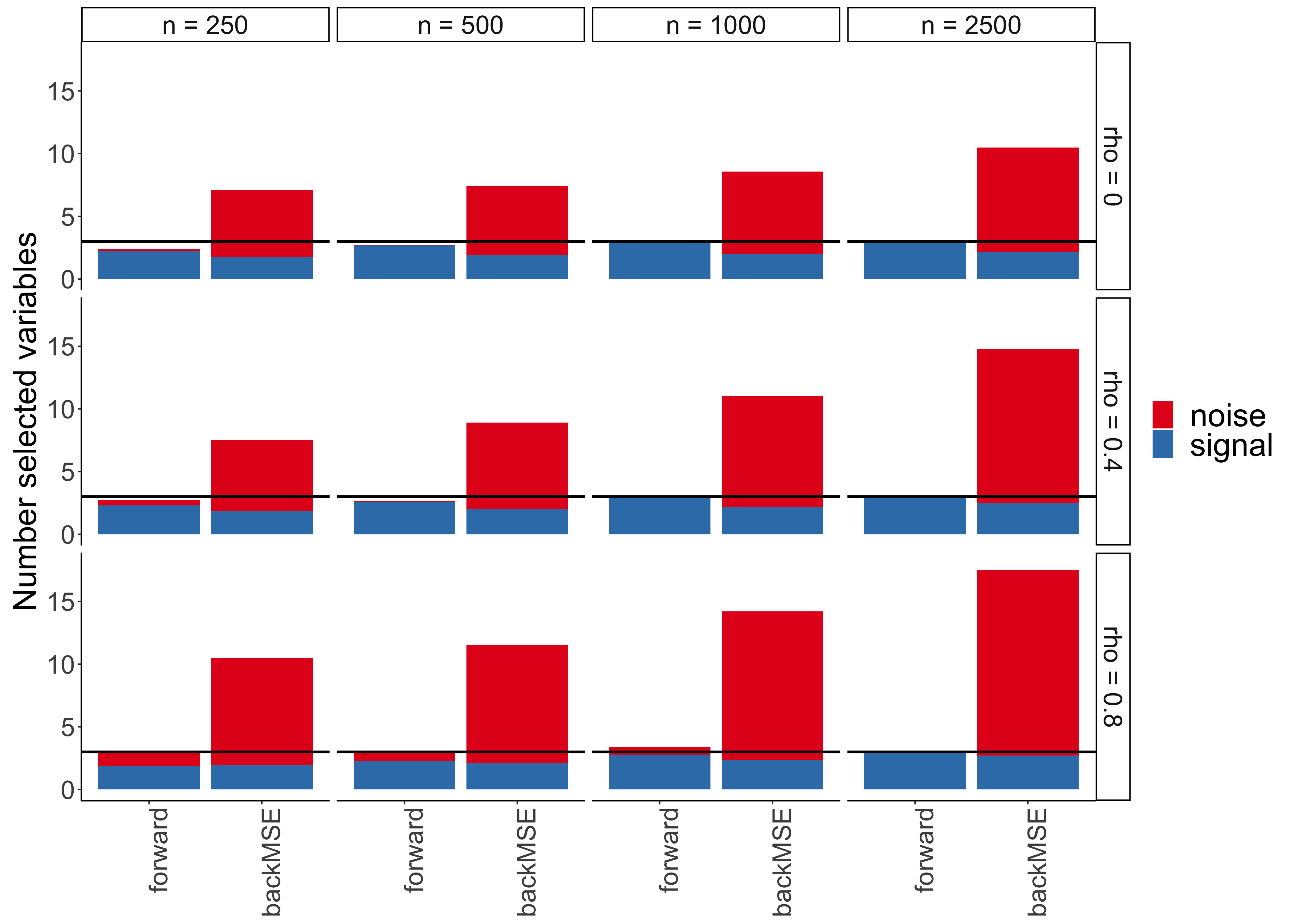} }}%
    \quad
    \subfloat[Model 3]{ \label{fig:model3} {\includegraphics[width=0.4\textwidth]{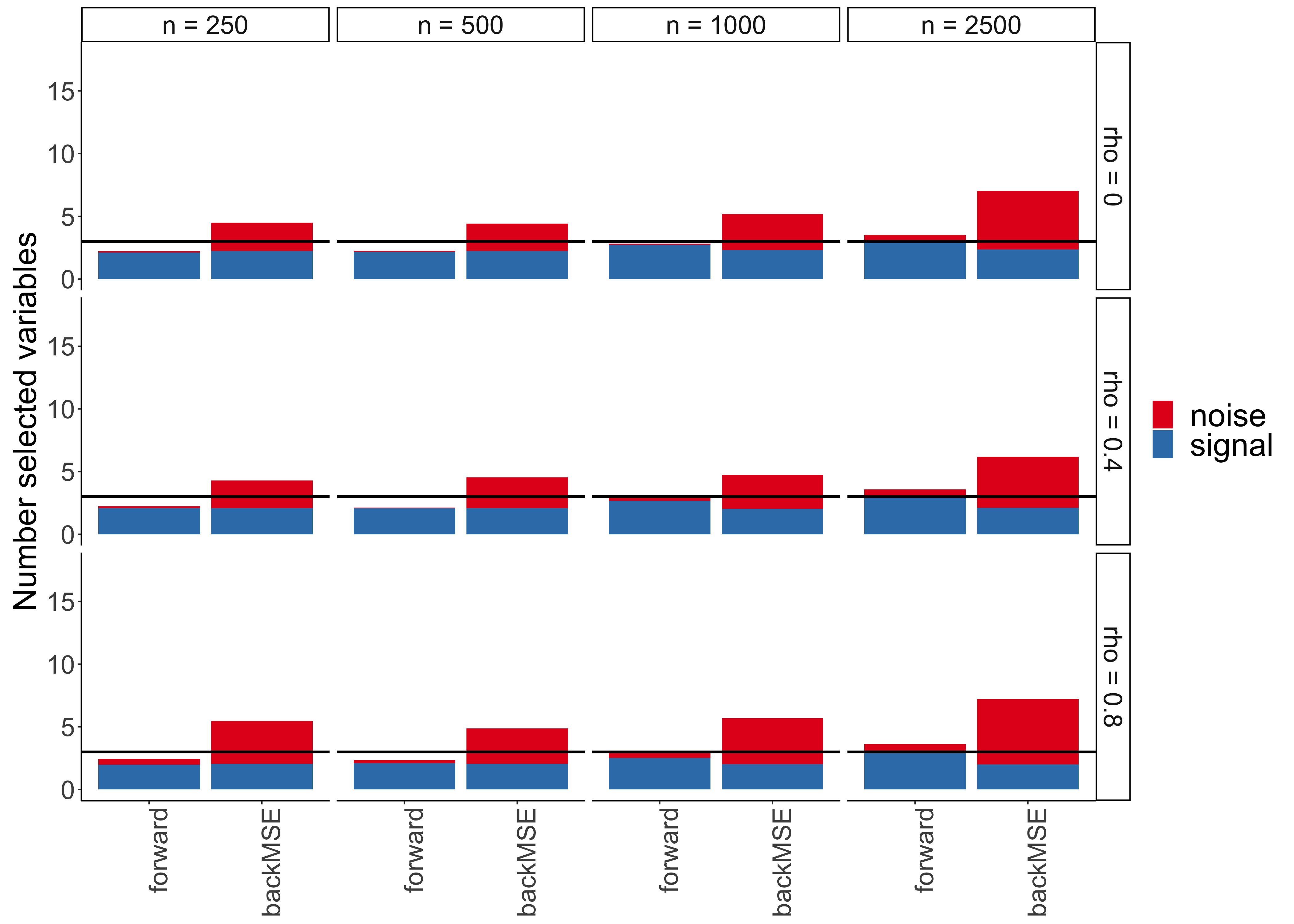} }}%
    \caption{Average number of selected variables over 100 simulations. Total number of variables in height of the bar blue for signals variables and red for noise variables. Bars from left to right correspond to the sample sizes and top to bottom corresponds to different levels of correlation. The horizontal line coincides  the total number of signal variables in the model.}
		\label{fig:model123}
\end{sidewaysfigure}
For each model we simulate 100 data sets. The results are summarized in Figure  \ref{fig:model123}. 
For the first model we see that the backMSE method retrieves more signal variables than the forward selection for low sample sizes and that as the sample size grows the forward variable selection also recovers all signal variables. A large difference is seen in the number of noise variables that are selected. The forward selection performs much better in this than the backMSE, which systematically selects noise variables and tends to even select more as the sample size increases. This phenomenon is also visible for Models 2 and 3  as  seen in Figures \ref{fig:model2} and \ref{fig:model3}. For these two models where the variance is dependent on covariates, the CRPS criterion clearly has an edge over the backMSE that selects variables based on the mean squared error and therefore has a hard time selecting these variables. 

The reason why the backMSE selects many noise variables is two-fold. First the backMSE method selects the optimal set based on a predictive mean squared error criterion. This approach does not account for the inherent variable selection within the random forest, where at each node the split that reduces the variance the most is chosen. As a result the random forest is able to ignore noise variables partially. In practice this means that in an out-of-bag performance measure the addition of a single noise variable cannot be detected. Therefore the variables that are selected will not be the smallest set, but instead a set with maximum number of noise variables maintain the lowest performance. Secondly, the backMSE does not adequately deal with the correlation. For example in Model 1 with $\rho = 0.8$ all variables $X_1,\ldots , X_5$ have higher variable importance compared to $X_6$, which means that if $X_6$ is in the model, so are $X_2,\ldots, X_5$. 

Thanks to our testing approach, a small number of noise variables is selected with the forward selection. Using the randomness induced by the random forest, our testing procedure selects a variable that leads to a significant reduction of the predictive loss. The significance level naturally controls the number of selected noise variables by the nature of the testing procedure. We have set the significance level to $5\%$ for all simulations in the paper.

\section{Post-Processing maximum temperature forecasts}
\label{sec:data}

There are substantial risks related to extremely high temperatures. Consecutive days of high temperatures, i.e. heat waves, lead to higher mortality, especially older people. Besides high temperatures can cause train rails to expand and thereby potentially disrupt the train system. Additionally, in the absence of rain they likely cause severe droughts as seen in 2018 in The Netherlands, which has had large consequences for nature areas and agriculture. The Royal Netherlands Meteorological  Institute (KNMI) issues alarms for persistent warm weather. To design a good alarm system it is essential to have good quality weather forecasts. One of the most used ensemble models, the European Centre for Medium-Range Weather Forecasts (ECMWF) ensemble model, has a negative bias in the maximum temperature forecast. As an illustration,  Figure \ref{fig:bias} shows the forecast bias for data observed at weather station de Bilt where KNMI is located. For accurate forecasts, this bias needs to be corrected for. This can be easily done by estimating the linear relation between the forecasts and the observations. Although this quickly improves the maximum temperature forecast, this leaves unused a vast amount of forecast data for other weather types. We will show that using a wide range of potential covariates, the maximum temperature forecasts are improved further than by a simple bias correction. By performing the variable selection we then also investigate in more detail what effect different covariates have on the forecast distribution estimated using the random forest model.

\begin{figure}	
	\centering
	\caption{\label{fig:bias} Scatter plot of error of the ECMWF high resolution deterministic run maximum temperature forecast against that deterministic forecast in the warm half year for the years 2011-2019 at De Bilt. The black line indicates a zero error and the red dashed line is the linear regression of the data points.} 
	\includegraphics[width= 0.5\textwidth]{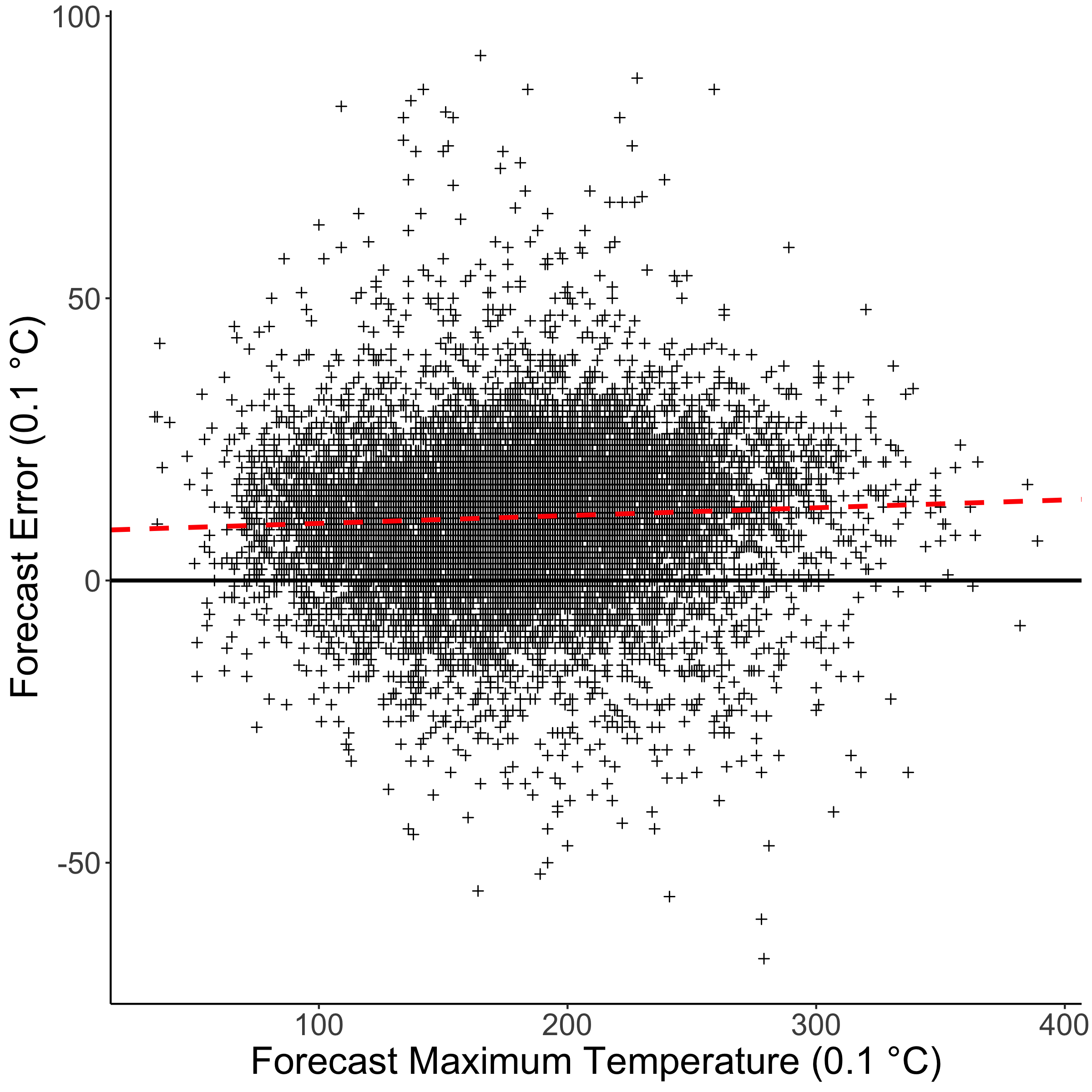}
\end{figure}

We use maximum temperature observed at seven stations spread across The Netherlands, namely Den Helder, Schiphol, De Bilt, Eelde, Twente, Vlissingen and Maastricht (\url{http://projects.knmi.nl/klimatologie/daggegevens/selectie.cgi}). The focus is on high temperatures, hence we consider only observations  from mid-April until mid-October. In total, we look at 9 years of data ranging from 2011 to 2019.

%% ADD HERE RESOLUTION AND CHARACTERISTICS OF ENSEMBLE
As covariates we use the output of the ECMWF model, which contains a 51 member ensemble and a higher resolution deterministic run. These forecasts are initiated two times a day, at 00 UTC and at 12 UTC, but here we use only forecasts of the latter run. We define the lead time of the forecast as the time difference between the start of the day for which the forecast is valid and the initiation time of the forecast. For this analysis we will consider forecasts with lead times equal to $36 + 24k$ hours for $k =0, 1,2,3,4,5$. The ensemble contains 51 exchangeable members and in order to use them we compute a set of summary statistics from the ensemble. These summary statistics are the mean, standard deviation, quantiles and number of ensemble members exceeding a pre-specified threshold. For the quantiles in our application we choose the 25, 50 and 75 percent quantiles. Thresholds are chosen as to extract different types of information from the ensemble relative to the weather phenomenon itself. For cloud cover we use three thresholds, 20 percent, 50 percent and 80 percent of cloud cover to create variables measuring probabilities of a few to no clouds, partly clouded weather and clouded weather.

Apart from the forecasts for maximum temperature and cloud cover, we consider other covariates including forecasts for daily average temperature at 2m, dew point temperature, minimum temperature, daily average wind speed and daily accumulated precipitation. For long lead times, predictability of these typical weather phenomena decreases, but the range of predictability of for example flow pattern at 500 hPa  extends much further. Therefore the first three principal components flow pattern at 500 hPa over Europe are also used as predictors \cite{Kruizinga1979}. Note that these covariates are the same for each station.

For the response variable we consider the forecast error, which we obtain by subtracting the deterministic forecast run from the observed maximum temperature. By doing so, the seasonality in the temperature is largely reduced. In Figure \ref{fig:bias}, the forecast error is clearly visible as the distance between the red linear regression line and the x-axis is rather large. Additionally it is clear that the spread of error changes as a function of the deterministic forecast. A possible explanation is that there is still remain seasonality effects that are not taken care of by a simple linear effect. Therefore, also the sine and cosine of the day of the year with a period of one year and half a year are included as two predictors. In total this gives us 71 covariates. For a given lead time an observation on a given day is denoted by $(Y,\bb X)$, with $Y$ the error of the deterministic run and $\bb X$ the 71 dimensional covariate vector. 

%An important characteristic of temperature data is the presence of seasonality. We will get rid of the largest part of seasonality  by focussing on the ensemble errors, which we obtain by subtracting the deterministic forecast run from the observed maximum temperature. In Figure \ref{fig:bias} the bias in the forecast is clearly visible as the distance between the red linear regression line and the x-axis is rather large. Additionally it is visible that the spread of error changes as a function of the deterministic forecast. This indicates that there might still be seasonality effects that are not taken care of by a simple linear effect. Therefore, also the sine and cosine of the day of the year with a period of one year and half a year are included.

In this  section, we will explore 3 methods, quantile random forests as in \cite{Athey2019} with all variables, quantile random forests with variables selected by our forward variable selection and Non-homogeneous Gaussian Regression (NGR) \cite{Gneiting2005}. This third method is known in the meteorology literature as an EMOS (Ensemble Model Output Statistics) method and is used as a standard approach in post-processing. The NGR method assumes the data follow a Gaussian model, 
\begin{align*}
	&Y|\bb X= \bb x \sim N\left({\bb x}^T\beta,\exp({\bb x}^T\gamma)\right). 
\end{align*}
The parameters $\beta$ and $\gamma$ are then estimated by maximum likelihood. For this model, we select variables based on the Bayesian Information Criterion (BIC) by a forward and backward stepwise approach. 

For each station and lead time, we  fit a separate model. The models are estimated with a 9-fold cross validation, each time leaving out a single year.  In Figure \ref{fig:crps} the CRPS risk is shown as a function of lead time, where the box-plots contain the CRPS risk for all stations. Then in Figure \ref{fig:nr-cov} the number of selected variables is shown for our method and NGR, where we leave out the random forest with all 71 variables.

\begin{figure}%
    \centering
    \subfloat[CPRS]{ \label{fig:crps} {\includegraphics[width=0.45\textwidth]{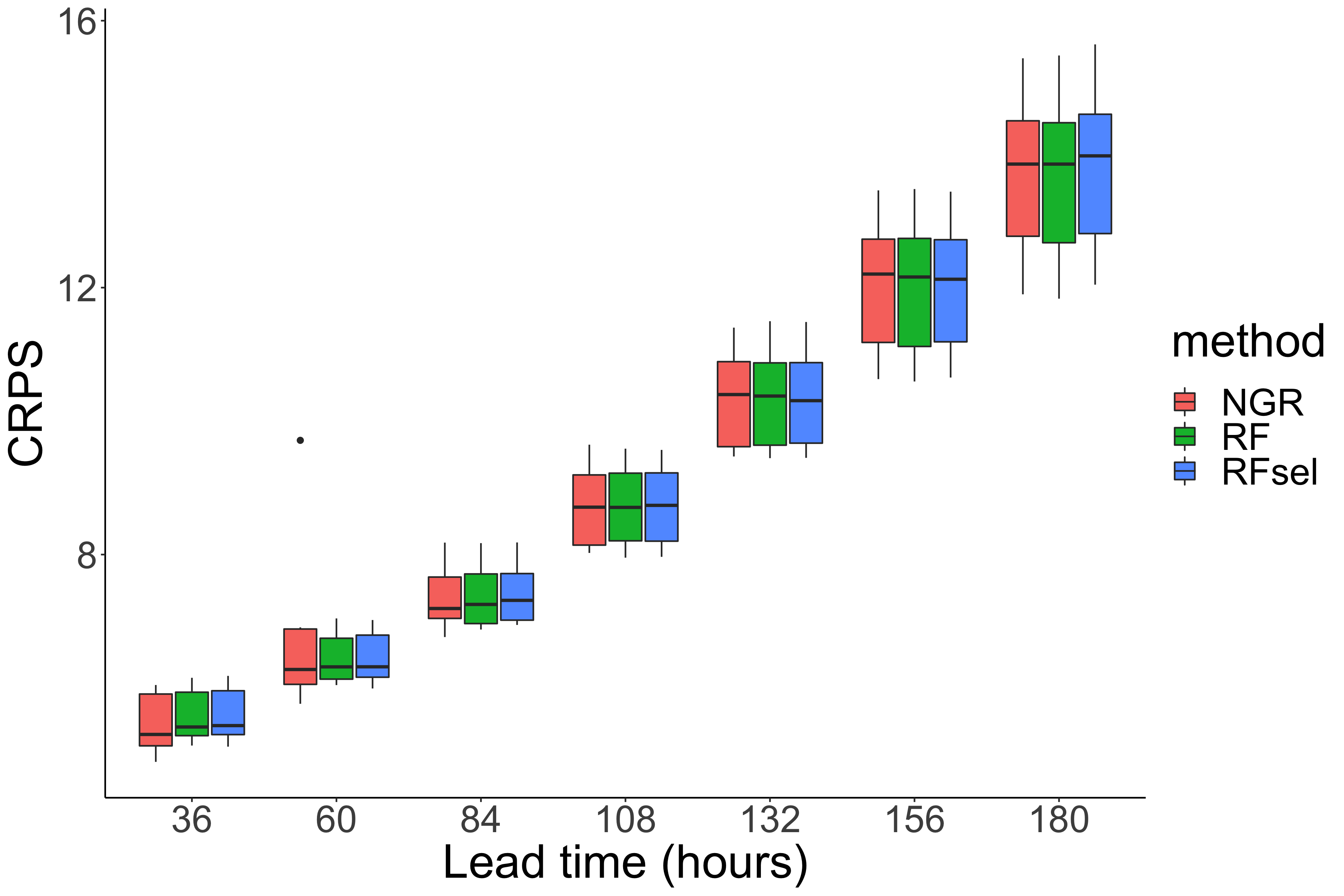} }}%
    \qquad
    \subfloat[Selected variables]{\label{fig:nr-cov} {\includegraphics[width=0.45\textwidth]{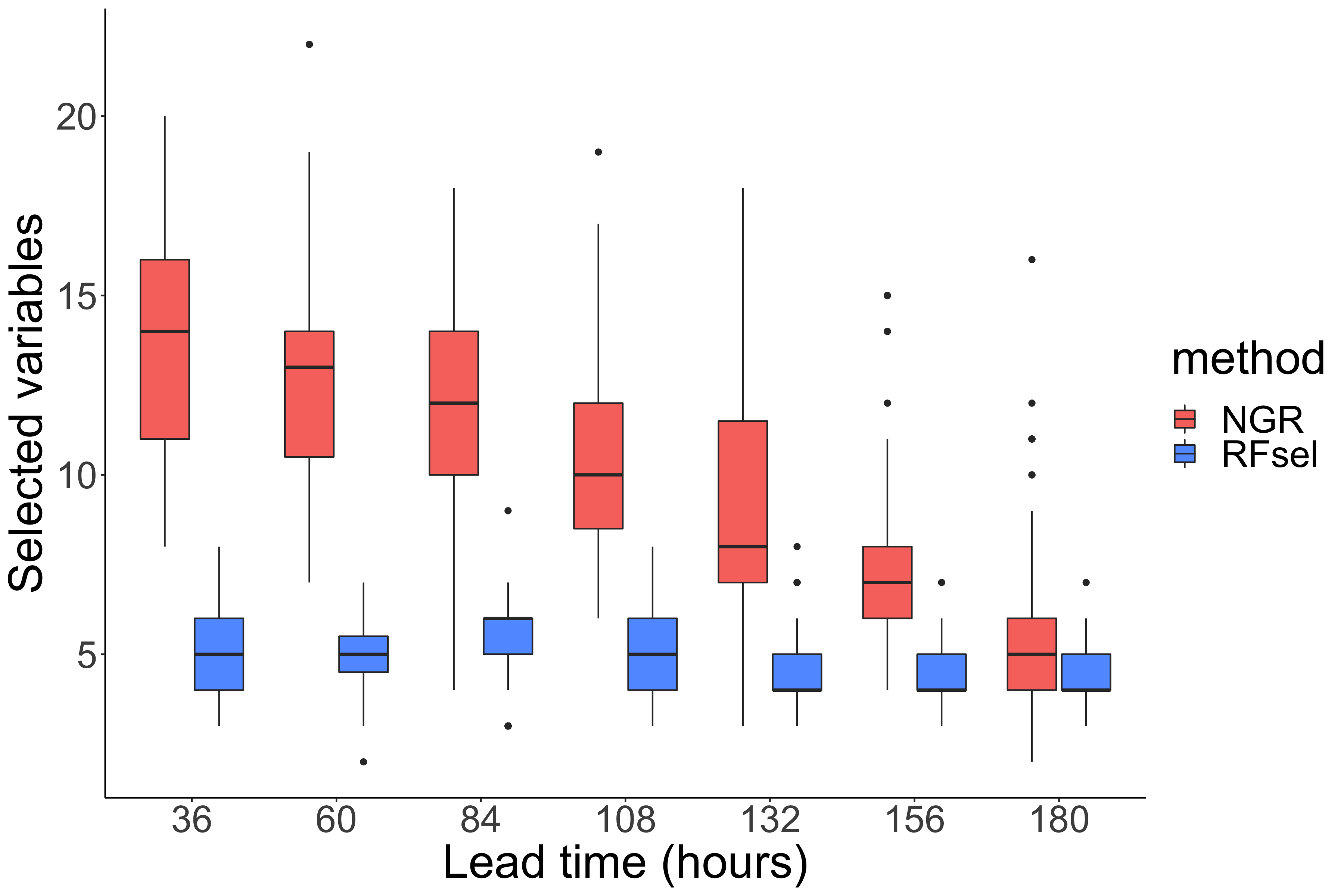} }}%
    \caption{(Left panel) the CRPS risk against leadtime where the boxplots contain the CRPS risk for each station. (Right panel) the number of selected variables against leadtime.}%
\end{figure}
 
Based on the CRPS, all methods perform comparably. This is also confirmed by other verification measures such as reliability diagrams, quantile reliability diagrams and probability integral transform histograms, which are not shown in this paper. A selection of these diagrams is shown in the appendix. The interesting part comes from the number of selected variables. Our method selects a small portion (less than $10\%)$ of covariates, substantially less than NGR.
 We  investigate this further by considering which variables are selected. The result is visualized in Figure \ref{fig:selected-variables}. For each lead time, the color indicates the frequency of a covariate being selected by 63 estimated models (7 locations and 9 cross-validation sets per location). An extremely important variable would be selected all 63 times.

\begin{figure}%
    \centering
    \includegraphics[width=\textwidth]{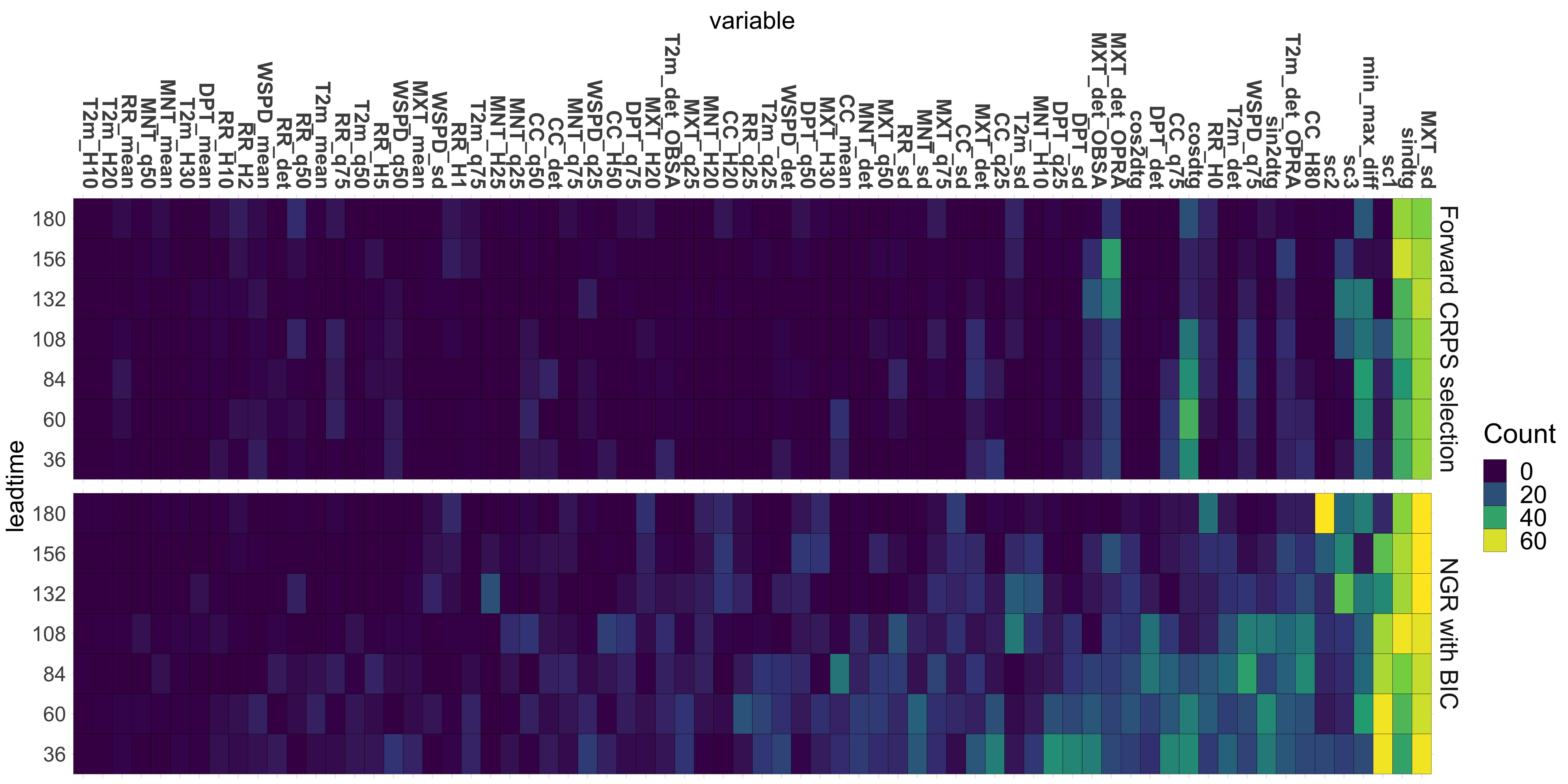}
    \caption{Selected variables for each lead time. All cross validations and all stations have been aggregated where the maximum number of times a variable can be selected is 63.}%
    \label{fig:selected-variables}%
\end{figure}

Yellow boxes correspond to a few variables that are always selected. But the number of light blue boxes is much smaller for our method compared to NGR. From this we  conclude that our method selects fewer variables and it also selects similar variables for different stations. This suggests that our variable selection method is more robust compared to the NGR method for short lead times, where  a diverse set of variables is selected.

\begin{figure}%
    \centering
    \includegraphics[width=0.8\textwidth]{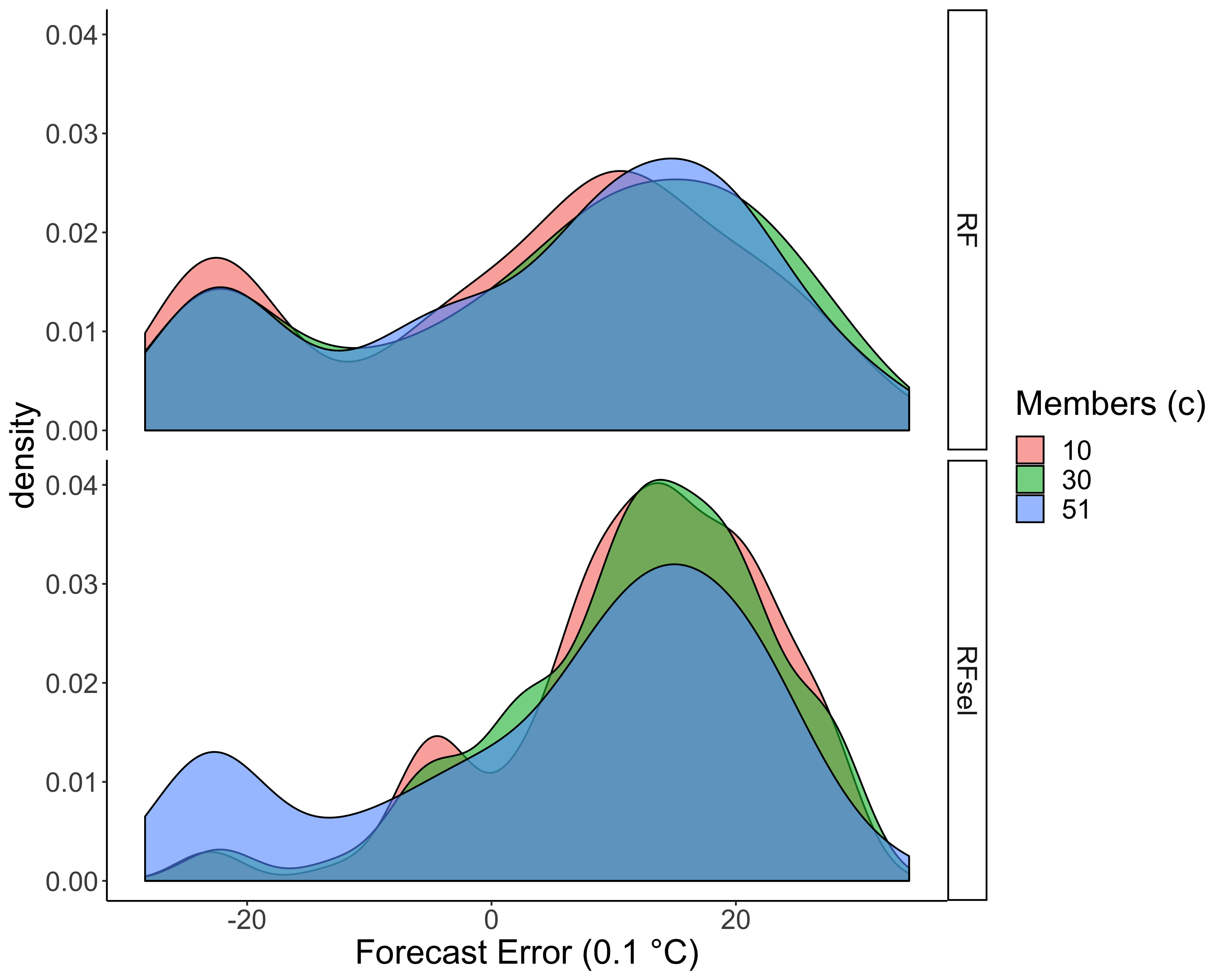} %
    \caption{The conditonal distribution of the forecast error, based on estimated models for De Bilt with lead time 36 hours. Different colors indicate different values of 
		cloud cover while the values of other covariates are fixed to be the same as that for 31-05-2018 at de Bilt.
		 Top figure for the random forest model with all variables and the bottom figure for the random forest model with the selection of variables.}%
    \label{fig:variable-effect}%
\end{figure}

The main variables that our method selects are the sine of the day of the year, the standard deviation of the ensemble forecast and variables related to cloud cover. Since our procedure typically selects a small set of variables, it is then feasible to interpret the estimated model. For instance, to investigate how a selected covariate, say $X_j$ influences the forecast distribution of $Y$,  one can compare the conditional distribution of $Y$ given different values of  $X_j$ while the other covariates denoted by $\bb X_{(-j)}$ are kept the same. We consider $Y$, the forecast error at de Bilt with lead time 36 hours and $X_j$ the cloud cover, which is the number of ensemble members with cloud cover exceeding $50\%$.  The values of other covariates are fixed the same as the data of 31-05-2018 at De Bilt, denoted by $\bb X_{(-j)}=\bb x^*_{(-j)}$. Figure \ref{fig:variable-effect} shows the conditional density of $Y$ given $(X_j=c, \bb X_{(-j)}=\bb x^*_{(-j)})$, where different colors indicate three different values of $c$. Note that all 51 ensemble members exceed 50\% cloud cover. As shown in the lower panel of Figure \ref{fig:variable-effect}, cloud cover clearly has an effect based on the estimation of our method: $c=51$ yields a bimodal distribution while $c=10$ leads to a unimodal distribution. This suggests that in this configuration, higher cloud cover implies a higher chance for a negative forecast error (left mode in the plot). However, the distributions obtained by random forest (without variable selection) are very similar; see the upper panel of the figure. This is because that there are other covariates correlated to cloud cover, and these covariates still indicate that there is a high cloud cover even when the number of ensemble members exceeding 50\% could cover is set to 10.  In other words, changing the value of a single variable in a random forest with many correlated covariates is {\it not} interpretable. Such a random forest model fails to capture the effect of a signal variable.

%Now the variable selection method allows us to use a small set of variables a natural question is, how do these variables influence the prediction. Some intuition can be obtained by considering an observation for a single day $(Y^*,\bb X^*)$. The influence of variable $j$ for this specific day can be seen by visualizing how the distribution changes as a function of variable $j$. Figure \ref{fig:variable-effect} shows the effect of cloud cover on the forecast distribution at de Bilt for lead time 36 hours. The cloud cover variable sums the number of ensemble members exceeding 50 percent, where the true value for the Figure \ref{fig:variable-effect} is 51. The top plot relates to a random forest with all variables and the bottom figure to the random forest with our variable selection. In the lower figure it is clear that cloud cover on this day forces the distribution to be bi-model. Showing that in the case of cloud cover the ECMWF model gives an underestimation of the true error. 

\section{Summary and discussion}
\label{sec:dis}
In this paper, we have proposed a general framework for a forward variable selection with respect to a loss function. We show in population sense that under an independence assumption between covariates and by choosing the continuous ranked probability score as loss function that the forward selected variables form the correct set with respect to the CRPS risk functional. Applying the method in a random forest set-up, we show that the out-of-bag samples can be efficiently used to asses predictive performance. The main difficulty in the procedure is determining the stopping time, that is when selecting more variables does not add in predictive performance. Due to randomness and the inherent greedy variable selection procedure in the random forest algorithm this can not be determined by the calculated predictive performance.  Instead in a single forward selection step we use the predictive performance of each possible set to construct a test to detect increasing predictive performance. The procedure then stops a null hypothesis of non increasing predictive performance can not be rejected. We show that this test is consistent.

With a simple simulation study we show that our variable selection method, compared to a backward selection based on a permutation importance measure, is more capable of discriminating  between signal variables and noise variables. This improvement is shown for various sample sizes and correlations between the covariates.

In an application on post-processing maximum temperature, our method shows consistency in the number of selected variables and in the variables being selected over several stations. Moreover, our method selects less than 10 percent of the covariates and still attains the same predictive power as the quantile random forest with all covariates. Further, it is easier to interpret our resulting model, due to the largely reduced number of covariates. Without variable selection, it is hardly possible to analyse the effect of a single covariate in a random forest model when it is heavily correlated to other covariates. In our data example, in the presence of thick cloud cover, our random forest model indicates that there is a higher risk of over forecasting (lower panel of Figure \ref{fig:variable-effect}) instead of under-forecasting which was indicated by Figure \ref{fig:bias}. 

There are two interesting directions for future research. First, the theoretical results in Sections \ref{sec:mathematical-set-up} and \ref{sec:method} are derived under the assumption that the covariates are independent. However, the ability of our method to select signal variables from a correlated setting is evidenced by our simulation study and data application. It is interesting to investigate such a setting.  Second, we focus in this paper on how this forward method behaves for the CRPS, but the mathematical set-up in Section \ref{sec:mathematical-set-up}  is much more general and allows to select variables with respect to other loss functions. It would be interesting to extend the current results to a more general set of loss functions.

\bibliographystyle{plain}
\bibliography{bibliography}

\section*{Acknowledgements}
We would like to thank Maurice Schmeits, Kiri Whan and Dirk Wolters for many useful discussions on the application of maximum temperature.

This work is part of the research project ``Probabilistic forecasts of extreme weather utilizing advanced methods from extreme value theory" with project number 14612 which is financed by the Netherlands Organisation for Scientific Research (NWO).

\appendix
\section{CRPS calculations}

Here we show for an observation $y$ and a distribution function $F$ that the CRPS calculated from the quantile perspective as well as from the distribution function perspective are equivalent as shown in \cite{Laio2007}, i.e we show that 
\begin{equation}
	2 \int_0^1 \rho_{\tau}(y - F^{-1}(\tau)) \mbox{d}\tau = \int_{-\infty}^{\infty} (I(y\leq z) - F(z))^2 \mbox{d}z. 
\end{equation}

We have,
\begin{align*}
	2 \int_0^1 \rho_{\tau}(y - F^{-1}(\tau)) \mbox{d}\tau &= 2 \int_0^1 (I(y \leq F^{-1}(\tau)) - \tau)(F^{-1}(\tau) - y) \mbox{d}\tau\\
	&= 2 \int_{-\infty}^{\infty} (I(y \leq z) - F(z))(z - y)f(z) \mbox{d}z\\
	&= \left.-(I(y \leq z) - F(z))^2(z-y)\right|_{-\infty}^{\infty} + \\
	& \int_{-\infty}^{\infty} (I(y\leq z) - F(z))^2 \mbox{d}z\\
	&= \int_{-\infty}^{\infty} (I(y\leq z) - F(z))^2 \mbox{d}z
\end{align*}

Here we use a substitution in the second line of $\tau = F(z)$ and in the third line we apply integration by parts.

\section{Calibration of forecasts for lead time 60 and station De Bilt}

 Figure \ref{fig:rank-hist} shows a histogram of the $\hat{F}(Y)$ where $\hat{F}$ is the forecast distribution for observation $Y$. If $F$ is calibrated the histogram should look like the histogram based on standard uniform random variable. 
\begin{figure}[ht]%
    \centering
    \includegraphics[width=0.8\textwidth]{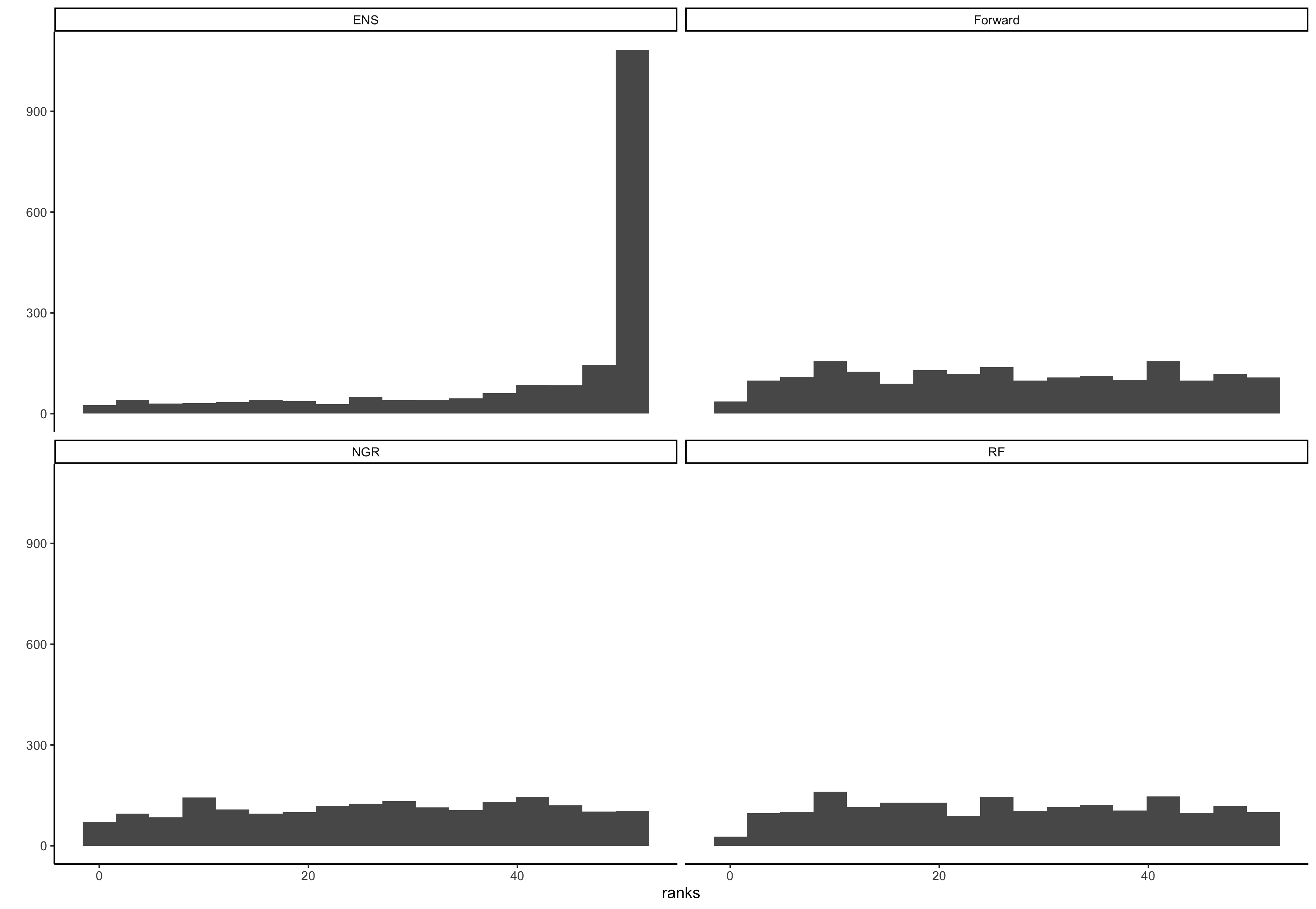}
    \caption{Rank histograms for lead time 60 h and station De Bilt. For forward selection, NGR, random forest and the raw ensemble forecast}%
    \label{fig:rank-hist}%
\end{figure}

Figure \ref{fig:RD} shows reliability diagrams. Let $t$ be a threshold and  define $p = \hat{F}(t)$ and $I = I(Y\leq t)$ for each forecast. A reliability diagram bins the probabilities $p$ in equally sized bins. The average indicator $I$ should be the same as the average $p$. Hence plotting these averages they should be approximately on the identity line; for detailed explanation we refer to \cite{Wilks2011}. 

\begin{figure}[ht]%
    \centering
    \includegraphics[width=0.8\textwidth]{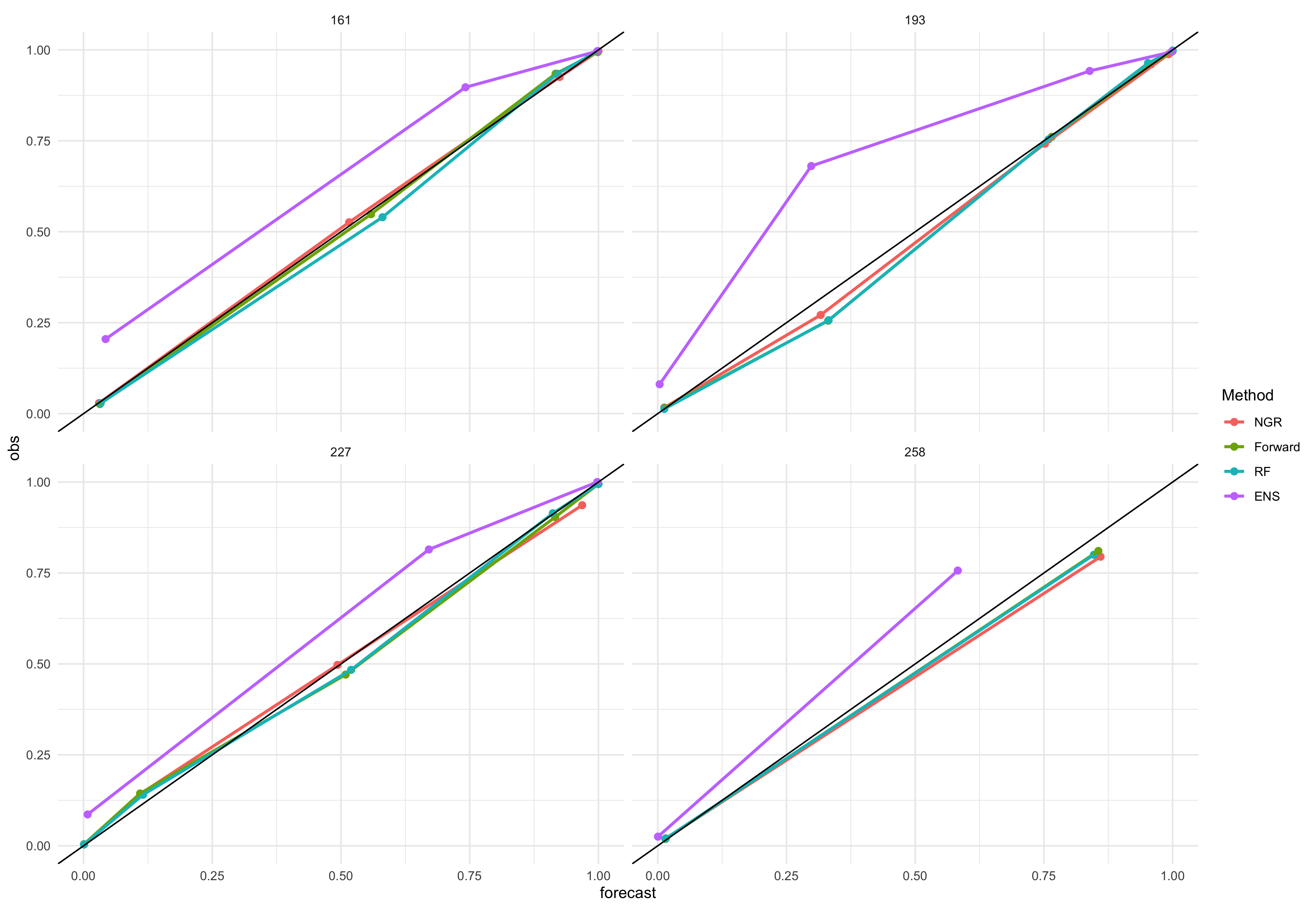}
    \caption{Reliability diagrams with thresholds equal to the $0.25,0.5,0.75$ and $0.9$ observational quantiles for lead time 60 and station De Bilt. Methods compared are: forward selection, NGR, random forest and the raw ensemble forecast}%
    \label{fig:RD}%
\end{figure}

Figure \ref{fig:QRD} shows quantile reliability diagrams. Let $\tau$ be a probability level and $\hat{Q}$ the forecast quantile function. Define $q = \hat{Q}(\tau)$ for each forecast. A quantile reliability diagram bins the quantiles $q$ in equally sized bins. The $\tau$ quantile of observation $Y$ should be the same as the average $q$. Hence plotting these against each other should be approximately on the identity line; for detailed explanation we refer to \cite{Bentzien2014}.

\begin{figure}[ht]%
    \centering
    \includegraphics[width=0.8\textwidth]{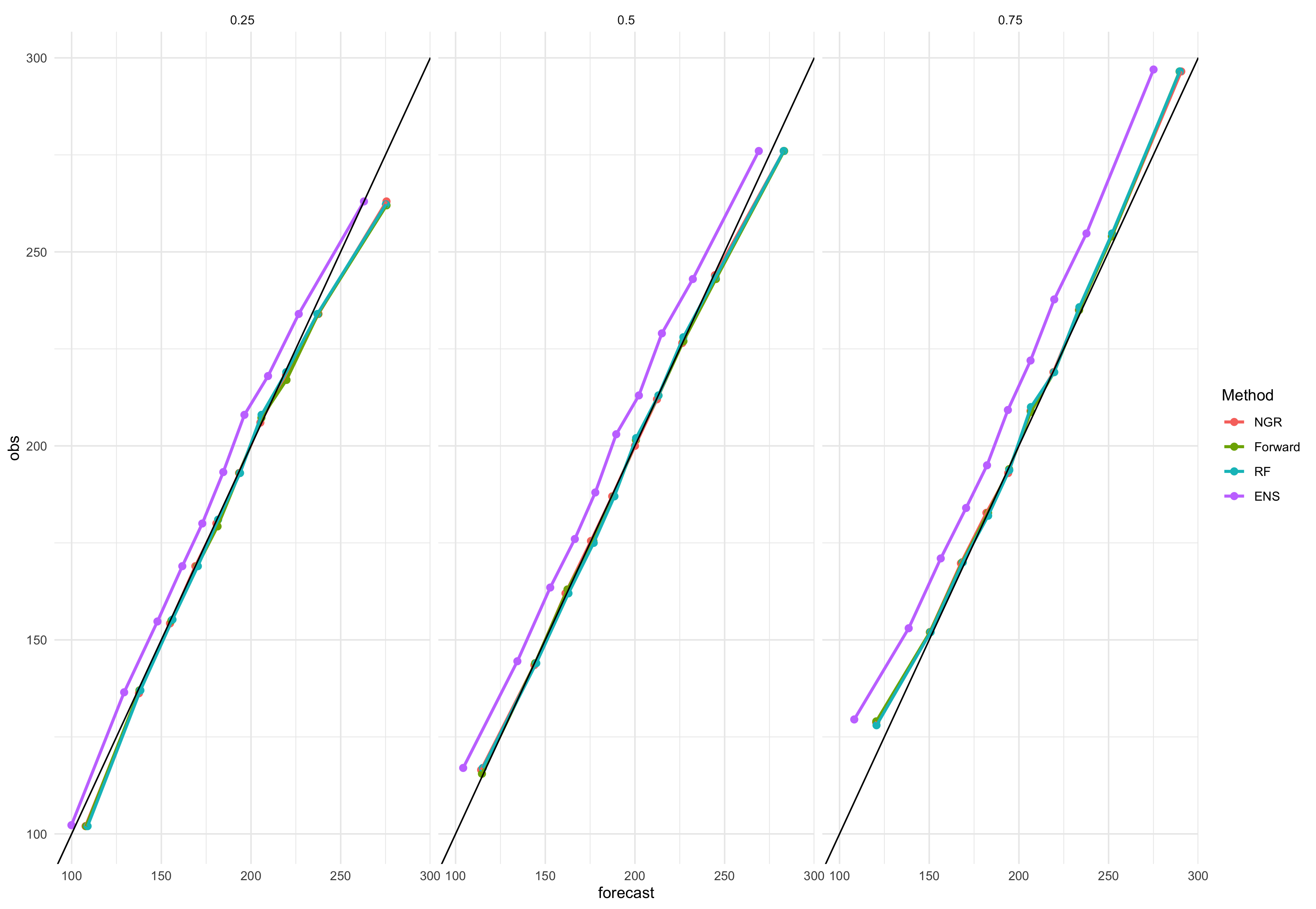}
    \caption{Quantile reliability diagrams for quantile levels equal to $0.25,0.5$ and $0.75$ for lead time 60 and station De Bilt. Methods compared are: forward selection, NGR, random forest and the raw ensemble forecast}%
    \label{fig:QRD}%
\end{figure}

\end{document}